\documentclass[11pt,letterpaper]{article}

% Equation
\usepackage{amsfonts,amsmath,amssymb}
\usepackage{amsthm}
\usepackage[T1]{fontenc}
\usepackage{mathabx}
% Figure
\usepackage[dvips]{graphicx}
\usepackage[usenames,dvipsnames]{xcolor}
%\usepackage{mediabb}

% Hyperlink
%\usepackage[colorlinks=true,linkcolor=black,citecolor=black,hyperfootnotes=true]{hyperref}
\usepackage{CJKutf8}
\usepackage{enumerate}% http://ctan.org/pkg/enumerate
% Natbib
\usepackage[round]{natbib}
\usepackage{multirow}
\usepackage{caption}
\captionsetup{font=small}
\usepackage{url}
% text size
\setlength{\textwidth}{450pt}
\setlength{\oddsidemargin}{8pt}
\setlength{\textheight}{670pt}
\setlength{\topmargin}{-30pt}

% Baseline Skip

\newcommand{\ER}{Erd\H{o}s-R\'{e}nyi }

\newcommand{\qt}{q_t}

\newcommand{\qtm}{q_{t-1}}
\newcommand{\qast}{q^{\ast}}

\newcommand{\J}{\mathcal{J}}%Jacobian

\newcommand{\vect}[1]{\mbox{\boldmath $#1$}}
\allowdisplaybreaks[1]

\newtheorem{definition}{Definition}
\newtheorem{assumption}{Assumption}
\newtheorem{proposition}{Proposition}
\newtheorem{theorem}{Theorem}
\newtheorem{corollary}{Corollary}

% \usepackage{xr}
% % for xr: latexmkrc is needed to be placed in the same level
% \makeatletter
% \newcommand*{\addFileDependency}[1]{% argument=file name and extension
%   \typeout{(#1)}
%   \@addtofilelist{#1}
%   \IfFileExists{#1}{}{\typeout{No file #1.}}
% }
% \makeatother
% \newcommand*{\myexternaldocument}[1]{%
%     \externaldocument{#1}%
%     \addFileDependency{#1.tex}%
%     \addFileDependency{#1.aux}%
% }
% \myexternaldocument{OnlineAppendix}

%for Tomo comments
%for TK comments
  
\newcommand*\rot{\rotatebox{90}}

\newcommand{\new}[1]{{\color{black}  {#1}}}
\newcommand{\add}[1]{{\color{black}  {#1}}}

\newcommand{\addr}[1]{{\color{black}  {#1}}}

% \newcounter{figcountSI}
% %\setcounter{figcountSI}{0}
% \renewcommand{\thefigcountSI}{S\arabic{figcountSI}}
% \newcommand{\figcountSI}{\refstepcounter{figcountSI}}

%\title{Message-passing approaches for calculating the dynamics of contagion in complex networks}
%\title{A message-passing approach to contagion dynamics in binary-action games}
%\title{Contagion dynamics in binary-action games on networks: A message-passing approach}
%\title{Contagion dynamics in binary-action games:\\ A message-passing approach}
%\title{Contagion dynamics in binary network games: A message-passing approach}
%\title{Contagion of behavior on multiplex networks: A message-passing approach}

\title{Dynamics of diffusion on monoplex and multiplex networks: \\ A message-passing approach\footnote{
We would like to thank Takehisa Hasegawa, Ryoji Sawa and Takashi Shimizu for their useful comments and suggestions. 
Kobayashi acknowledges financial support from JSPS KAKENHI\ 19H01506, 20H05633 and 22H00827. Onaga acknowledges financial support from JSPS KAKENHI 19K14618 and 19H01506.}}

\date{\today}
\author{Teruyoshi Kobayashi\thanks{kobayashi@econ.kobe-u.ac.jp} \\
Kobe University \and Tomokatsu Onaga\thanks{onaga@se.is.tohoku.ac.jp}\\ Tohoku University}

\begin{document}

\begin{CJK*}{UTF8}{gbsn}

\maketitle

\vspace{-10mm}
\begin{abstract} \setlength{\baselineskip}{13pt}%less than 150 words
 New ideas and technologies adopted by a small number of individuals occasionally spread globally through a complex web of social ties. Here, we present a simple and general approximation method, namely, a message-passing approach, that allows us to describe the diffusion processes on \add{(sparse) random} networks in an almost exact manner. We consider two classes of binary-action games where the best pure strategies for individual players are characterized as variants of the threshold rule. We verify that the dynamics of diffusion observed on synthetic networks are accurately replicated by the message-passing equation, whose fixed point corresponds to a Nash equilibrium, while the conventional mean-field method tends to overestimate the size and frequency of diffusion. Generalized cascade conditions under which a global diffusion can occur are also provided. We extend the framework to analyze multiplex networks in which social interactions take place in multiple layers.

%----Ver. 2: less than 100 words-----
%We present a simple and accurate approximation method, namely, a message-passing approach, to describe the diffusion processes on complex networks in an almost exact manner. We consider two classes of binary-action network games where the best pure strategies are characterized as variants of the threshold rule. We verify that the diffusion processes observed on synthetic networks are accurately replicated by the message-passing method, while the conventional mean-field method tends to overestimate the size and frequency of diffusion. We also extend the framework to analyze multiplex networks in which social interactions take place in multiple layers.
{\flushleft JEL \emph{classification:} C72, D85, L14}
\end{abstract}

\section{Introduction}
 
Decisions are often a result of the influence of others. One would be more likely to adopt a new technology if many friends and colleagues are already using it, while one may forego the technology if only a few have adopted it.  
Such peer effects through social ties are modeled as coordination games on a social network~\citep{kandori1993ECMAlearning,morris2000contagion,Jackson2008book}. In a class of $2\times 2$ coordination games, in which there are two pure Nash equilibria, changes in the strategies undertaken by a small fraction of players may initiate contagion that finally leads all the players in the network to change their behavior~\citep{morris2000contagion,jackson2007diffusion,lopez2008GEBdiffusion,galeotti2010network,lopez2012GEBinfluence,lelarge2012diffusion,sadler2020diffusion,melo2021uniqueness}. 
\add{An alternative} framework for studying strategic interactions in networks is provided by \cite{ballester2006}. 
In their model, a player takes a real-valued action $x\geq 0$ to maximize a quadratic utility function that depends on the neighbors' actions, as well the player's own action. 
\addr{
They show that players' best strategies in equilibrium are characterized by their positions in the network, where the best action $x$ is proportional to the player's Bonacich centrality\footnote{See, \cite{chen2018AEJmultiple} for an extension of the \citeauthor{ballester2006}'s \citeyearpar{ballester2006} model.}.
}
% They show that in the Nash equilibrium, players' best actions are determined by their location in the network, exhibiting a proportionality to the Bonacich centralities\footnote{See, \cite{chen2018AEJmultiple} for an extension of the \citeauthor{ballester2006}'s \citeyearpar{ballester2006} model.}.    

 In \add{the standard} coordination games on a network, the best pure strategy is formulated as a \emph{fractional-threshold rule} in which a player becomes ``active'' if a certain fraction of the neighbors are active~\citep{morris2000contagion,Jackson2008book}.  In \new{the utility-maximizing games} proposed by \cite{ballester2006}, it can also be shown that the best strategy is described as a threshold rule as long as players' actions are binary, i.e., $x\in\{0,1\}$, but an essential difference is that the best binary action is expressed as an \emph{absolute-threshold rule}; a player decides to be active (i.e., select $x=1$) if the total number, as opposed to the fraction, of active neighbors exceeds a certain threshold, as in the classical threshold model of \cite{Granovetter1978}. Thus, \new{the two types of games} fall into two distinct classes of threshold models.

 While the behavior of each player is determined by a threshold rule at the local level, understanding the aggregate dynamics of diffusion and its equilibrium property is a non-trivial problem if players are interconnected in a complex manner.
As a conventional approach for achieving this goal, mean-field approximations have been extensively used to calculate the steady-state fraction of active players~\citep{jackson2007diffusion,lopez2006contagionIJGT,lopez2008GEBdiffusion,lopez2012GEBinfluence,lelarge2012diffusion}. A key idea of the mean-field method is that the (possibly heterogeneous) probabilities of neighbors being active are approximated by a constant probability that a randomly chosen neighbor \addr{is} active. 
%\addr{\cite{jackson2007diffusion} obtain the approximated probability as a fixed point of the self-consistent equation, with which they calculate the steady-state fraction of active players. 
%\cite{lopez2006contagionIJGT,lopez2008GEBdiffusion,lopez2012GEBinfluence} and \cite{lelarge2012diffusion} obtain the (first-order) conditions under which a large fraction of players can become active due to diffusion. 

 In this study, we develop a more accurate approximation method that allows us to describe the dynamics, as well as the equilibrium, of contagion on complex networks in an almost exact manner. Our method, called the \emph{message-passing approach}, is a more elaborated version of the conventional mean-field approximation in that the message-passing approach takes into account the directionality of the spreading process~\citep{gleeson2018message,dall2021coordination}\footnote{
 Message-passing approaches are approximation methods that have been developed in statistical physics. 
 In the study of ferromagnetism, Ising model, in which each of atomic spins is in one of two states $\{-1,+1\}$, was initially analyzed based on the mean-field theory developed by \cite{weiss1907}. A more accurate solution of Ising model was obtained using Bethe approximation~\citep{bethe1935statistical}, and then a variant of the message-passing approximation, called the cavity method, was developed as an extended version of Bethe approximation, which can be applied to wider classes of models in statistical physics~\citep{mezard1987spin}. 
 }.
 In the mean-field method, the probability of a neighboring player being active is given as a function of the probability that the neighbor's neighbors are active, and a fixed point of the self-consistent equation corresponds to the steady state of the spreading process, \addr{i.e., a Nash equilibrium}~\citep{jackson2007diffusion,Jackson2008book,jackson-zenou2015games}.
 However, the recursive expression of the self-consistent equation necessarily incorporates a repercussion of peer effects among neighboring players because social influence, or ``messages,'' may be transmitted multiple times between the neighbors. The mean-field approximations \addr{may thus overestimate the activation probability of neighbors}.
 The message-passing method overcomes this problem by imposing a directionality condition that one of the neighbors to which a message will be passed at time step $t+1$ is not yet active at time step $t$. 
 In general, mean-field methods are accurate enough in the limit of large degrees, where network density is sufficiently high, but they do not necessarily provide good approximations for sparse networks~\citep{Dorogovtsev2008RevModPhys,gleeson2011high}. 
 Nevertheless, in the previous studies on network games, little quantitative validation has been performed to examine if the mean-field approximation correctly captures the ``true'' Nash equilibrium.
%  Indeed, our quantitative validation based on synthetic networks confirms that the mean-field method tends to overestimate the equilibrium fraction of active players, while the message-passing method can make an almost exact prediction for both types of contagion we examined. 
 %We note that under certain conditions the message-passing method is equivalent to a highly general approximation method, called the \emph{approximate master equations approach}~\citep{gleeson2013binary}, which takes into account possible transition patterns of neighbors' states.

 The main results of this study are summarized as follows. First, for each class of network games, we obtain a generalized version of the conventional \emph{cascade conditions}. Several studies provide analytical conditions that predict a parameter space in which a vanishingly small fraction of active players can cause a global cascade~\citep{Watts2002,Gleeson2007,lopez2008GEBdiffusion,lelarge2012diffusion,dall2021coordination}. However, if the share of initially active players, called ``seed players,'' is positive, which seems to be the case in real social networks, the accuracy of these cascade conditions would be undermined because the fraction of active players is assumed to be zero in evaluating the first derivative.
  Based on the message-passing equation, we derive generalized cascade conditions that incorporate the previously proposed ones as special cases.
 We show that the parameter space indicated by the generalized cascade conditions almost exactly matches the ``ground-truth'' cascade region obtained by numerical simulations on synthetic networks.

Second, the message-passing method can accurately describe the aggregate dynamics of contagion, i.e., the dynamic path of the share of active players to the equilibrium level, while previous studies on network games focus mainly on equilibrium property. The message-passing equation can be used to explain how fast and to what extent a ``message'' will spread through social ties, as well as how many players will finally receive the message. We show that the share of active players at a given time step $t$ is approximated by the probability of a randomly chosen player being active at time $t$, which is calculated by iterating the message-passing equation $t$ times.

 Finally, we extend the baseline game-theoretic framework for single networks (i.e., \emph{monoplex} networks) to the case of \emph{multiplex} networks. A multiplex network is a set of layers in each of which a network is formed by a common set of nodes~\citep{Brummitt2012_PRER,Brummitt2015PRE,bianconi2018book}.
 For instance, students' friendship in a school shapes a school social network, while students using Twitter may also be connected through follower-followee relationships.
 In \addr{such cases}, a monoplex model will not be enough to study the aggregate dynamics of social influence. If Twitter followers are more influential than school friends, or vice versa, then we need to take into account the inter-layer heterogeneity that would be reflected as a difference in the payoff parameters. 
 To analyze the contagion of behavior in the framework of multiplex networks, we redefine the best strategies in the two classes of games and develop a generalized cascade condition based on the largest eigenvalue of the Jacobian matrix that has a derivative of the multivariate message-passing equation in each element. We reveal that, for a given connectivity of networks, inter-layer heterogeneity tends to enhance contagion because players are more likely to become active in the influential layer, which would trigger a global contagion.

%\section{Related literature}

Our analysis is located at the intersection of game theory and network science. 
It has been widely recognized that the role of networks (e.g., social ties between individuals and trading relationships between firms) is crucial in understanding social and economic phenomena~\citep[c.f.,][]{amir2021oligopoly,barbieri2021complementarity,luo2021network,masatlioglu2021decision,meng2021competitive,la2022geographical}.
Game theorists have \addr{long} studied the diffusion of behavior on networks in which strategic interactions take place between players connected by social ties~\citep{Jackson2008book,Easley2010book,jackson2011overview,jackson-zenou2015games,tabasso2019diffusion}. A seminal work by \cite{morris2000contagion} examines a class of $2\times 2$ coordination games on regular graphs and defines a contagion threshold of the payoff parameter, showing how the possibility of contagion depends on the network structure.
%\footnote{\cite{galeotti2010network} develop a general framework of games with strategic substitutes, as well as strategiccomplements, in which players’ knowledge about the network is incomplete.}.
Another type of model to study network games is based on a utility function that depends on neighbors' actions. \cite{ballester2006} consider a continuous-action game on networks and show that the optimal strategy of a player in a Nash equilibrium is proportional to the player's Bonacich centrality. \cite{chen2018AEJmultiple} extend the framework to accommodate multiple (or bilingual) actions, as examined by \cite{goyal1997non}, \cite{immorlica2007role}, and \cite{oyama2015bilingual} in the context of coordination games.

%  The mean-field method is one of the most frequently used tools to find a Nash equilibrium of coordination games on a network.
%  %argue that a fixed point of the self-consistent equation obtained by a mean-field assumption represents the steady state of a spreading process, and thus, a Nash equilibrium. 
%   \cite{jackson2007diffusion} and \cite{lelarge2012diffusion} use a mean-field method to analytically calculate the steady-state fraction of active players in (monoplex) complex networks.
%  \cite{lopez2006contagionIJGT,lopez2008GEBdiffusion,lopez2012GEBinfluence} study the equilibrium in continuous-time models of contagion in which players' behaviors are ``reversible,'' that is, the states of players may revert from active to inactive as in the Susceptible-Infected-Susceptible (SIS) model of epidemics~\citep{Pastorsatorras2015RevModPhys}. Our approach is as simple as the mean-field method, but the message-passing approach can capture the dynamics of spreading processes more accurately than the naive mean-field method does.
%  In addition, we provide a more general framework, namely multiplex networks, in which players' strategic decisions may be affected in multiple social contexts.

 In network science, the collective dynamics of individuals' behavior have been recognized as an important research topic since the early 2000's~\citep{Watts2002,Watts2007}. 
 Watts' model of threshold cascades has been extended to a wide variety of collective phenomena, such as information cascades~\citep{Nematzadeh2014,kobayashi2015trend,unicomb2021dynamics}, and default contagion in financial networks~\citep{GaiKapadia2010,Cont2013,hurd2016Book,caccioli2018review}. The framework has also been extended to analyze contagion on multiplex networks in which nodes belong to multiple layers~\citep{Brummitt2012_PRER,Yagan2012,Brummitt2015PRE,bianconi2018book,unicomb2019reentrant}. 
 
 These studies in the field of network science usually take the threshold rules as given, but our work provides a microfoundation from a game-theoretic perspective; in both fractional and absolute threshold models, the threshold value is obtained as a function of the payoff parameters and the preference parameters. Many studies analyzing \addr{the} fractional-threshold models employ variants of the message-passing equation proposed by \cite{Gleeson2007} and \cite{Gleeson2008} to calculate the steady-state equilibrium.
 While the message-passing approach is not new in network science~\citep{dall2021coordination}, to the best of our knowledge, we are the first to provide formal proofs for the existence of and convergence to a fixed point of the message-passing equation.

\section{Binary-action games on monoplex networks}

\subsection{\add{Random network and locally tree-like structure}}

We first consider a single (i.e., \emph{monoplex}) uniformly random network formed by a sufficiently large number of players. Player $i$ is connected with $k_i$ other players by undirected and unweighted edges. $k_i$ is called the \emph{degree} of player $i$ (or node $i$).
Players at the end of the edges emanating from $i$ are called \emph{neighbors} of player $i$. At each time step, each pair of neighbors plays a one-to-one game.
The only assumption we make about the network is that it has a locally tree-like structure, i.e., there are no cycles (e.g., triangles, quadrangles) at least locally. 
This indicates that neighbors of a node are not directly connected, so the neighbors are not coordinating with each other.
Examples of this class of networks include sparse \ER random graphs
and the configuration models in which the degree distribution is prespecified while the nodes are connected at random subject to the degree constraint~\citep{molloy1995critical,newman2018book2nd}. 

In \ER networks and the configuration models, the clustering coefficient $C$, which is the average probability that two neighbors of a node are connected, is generally given by\footnote{See \cite{newman2018book2nd}.
\addr{For \ER networks with $\langle k\rangle \approx pN$, where $p$ denotes a connecting probability,} the clustering coefficient reduces to $C = \langle k\rangle/N$.}
\begin{align}
    C = \frac{1}{N}\frac{(\langle k^2\rangle - \langle k\rangle)^2}{\langle k \rangle^3},
\end{align}
where $N$ is the number of nodes in the network and $\langle \cdot \rangle$ denotes the average operator. Note that the moments $\langle k\rangle$ and $\langle k^2\rangle$ are fixed in \ER and the configuration models since the degree distribution is prespecified.
We thus have $\lim_{N\to\infty}C= 0$, meaning that the fraction of local triangles among all pairwise combinations is at most $O(N^{-1})$, which is negligible when the network is sufficiently large. 
\add{The density of these networks, given by $2M/(N(N-1))=\langle k \rangle/(N-1)$ where $M$ denotes the number of edges, is also vanishing as $N\to \infty$, meaning that they are sparse.}

\add{\subsection{Fractional threshold model}}

\begin{table}[tbh]
    \centering
        \caption{Payoffs in a coordination game. $(u_i,u_j)$ denotes the combination of payoffs, where $u_i$ (resp. $u_j$) is the payoff to player $i$ (resp. $j$). $a,c>0.$}
    \begin{tabular}{cccc}
   & &\multicolumn{2}{c}{Player $j$}\\
    \cline{2-4}
     \multicolumn{1}{c|}{}   & &  \multicolumn{1}{|c}{0}    & \multicolumn{1}{|c|}{1}  \\
    \cline{2-4}
    \multicolumn{1}{c|}{}&    0 &\multicolumn{1}{|c}{$(0,0)$} & \multicolumn{1}{|c|}{$(0,-c)$}\\
    \cline{2-4}
     \multirow{2}*{\rot{\rlap{~Player $i$}}}
     &  \multicolumn{1}{|c}{1} &\multicolumn{1}{|c}{$(-c,0)$} & \multicolumn{1}{|c|}{$(a,a)$}\\
    \cline{2-4}
    \end{tabular}
    \label{tab:one-good game}
\end{table}

We first describe a \add{standard} coordination game on a network. Each player takes action 0 or action 1, the payoffs of which are given in Table~\ref{tab:one-good game} where $a,c>0$.
Throughout the analysis, players taking action 1 are called \emph{active}, while those taking action 0 are called \emph{inactive}. 

%$a >0$ denotes the payoff of coordinating with a neighbor, and $c>0$ is the cost of taking action 1 that would be incurred if a player fails to coordinate with a neighbor. Clearly, there are two pure-strategy Nash equilibria; the payoffs $(u_i,u_j)= (a,a)$ are attained in the Pareto-dominant Nash equilibrium if both players take action 1. Here, we assume that action 0 is the status quo for all players; all pairs of connected players are initially in a Nash equilibrium where the payoffs are given by $(u_i,u_j)=(0,0)$. 

Now let us consider a situation in which an arbitrary chosen player changed the strategy from action 0 to action 1. We call this player a \emph{seed}. If the seed player adopts action 1, there arises a possibility that some of the neighbors may accordingly change their strategy, because they might be better off if they would coordinate with the seed. It is assumed that seed players will never revert their strategy. 
The condition for player $i$ to take action 1 is given by $-c(k_i-m_i) + a m_i>0$, or
\begin{align}
    \frac{m_i}{k_i} > \frac{c}{a+c}\equiv \phi,
    \label{eq:fractional_threshold}
\end{align}
where $m_i$ denotes the number of player $i$'s neighbors that take action 1. If a player takes action 1, it would affect the neighbors through the threshold condition \eqref{eq:fractional_threshold}, which may cause further changes in the neighbors' neighbors' action, and so forth~\citep{morris2000contagion}. 
\addr{
In this way, a change in the behavior of a single seed player in an infinitely large network
may cause a \emph{global cascade} of behavioral changes~\citep{Watts2002}.
}
\add{While the original Watts model assumes that the initial seed fraction is vanishingly small, here we will generalize the framework so that the effect of a positive fraction of seed nodes can be formally analyzed.}

It should be noted that players' actions are irreversible in the sense that players \addr{adopting action~1 will not have an incentive to revert to action 0}\footnote{Recall that seed nodes are \emph{assumed} to be irreversible.}.
This guarantees the monotonicity of the contagion process, and thus there is at least one stationary equilibrium~\citep{morris2000contagion}.
We call a class of contagion models in which the threshold condition is given by  \eqref{eq:fractional_threshold} \emph{fractional-threshold models}~\citep{Watts2002,Karimi2013PhysicaA}\footnote{\cite{unicomb2021dynamics} call this type of cascade model a \emph{relative} threshold model.}.
\begin{definition}
If players in a network change their strategies according to \eqref{eq:fractional_threshold}, the resulting cascading behavior is called a fractional-threshold contagion.
\end{definition}

%\subsection{Binary-action games with quadratic utilities}
\new{\subsection{Absolute threshold model}}

There is a strand of literature on continuous-action games on networks in which each player takes an action represented by a real value $x\geq 0$~\citep{ballester2006,jackson-zenou2015games}. Typically, player $i$ maximizes the following quadratic utility function
\begin{align}
    u_i(x_i;{\bf{x}}_{-i}) = \alpha x_i - \frac{1}{2}x_i^2 +\gamma\sum_{j\ne i} \mathcal{A}_{ij}x_ix_j, 
\end{align}
where ${\bf{x}}_{-i}$ denotes the vector of actions taken by players other than $i$, and $\mathcal{A}_{ij}$ is the $(i,j)$th element of the adjacency matrix $\mathcal{A}$; $\mathcal{A}_{ij}=1$ if there is an edge between players $i$ and $j$, and $\mathcal{A}_{ij}=0$ otherwise. Since there is no self-loop and edges are undirected, we have $\mathcal{A}_{ii}=0$ and $\mathcal{A}_{ij}=\mathcal{A}_{ji}$. The first term, $\alpha x_i$, captures the payoff of taking action $x_i$ with weight $\alpha>0$, while the second term, $x_i^2/2$, represents the cost of taking action. 
 The third term, $\gamma\sum_{j\ne i} \mathcal{A}_{ij}x_ix_j$, denotes the benefit of coordinating with neighbors.
% , where a positive value of $\gamma$ indicates the strategic complementarity of taking action $x_i>0$, formally defined by
% \begin{align}
%     \frac{\partial u_i}{\partial x_i\partial x_j} = \gamma \mathcal{A}_{ij}\geq 0.
% \end{align}
We focus on the case of $\gamma \geq 0$ to investigate the dynamics of a monotonic cascading process, in which a stationary equilibrium is guaranteed.

Now we consider a case where players' actions take binary values: $x_i\in\{0,1\}$. The utilities of taking action $x_i=0$ and $x_i=1$ are respectively given as
\begin{align}
    u_i(0;{\bf{x}}_{-i}) & = 0, \\
    u_i(1;{\bf{x}}_{-i}) & = \alpha - \frac{1}{2}+\gamma m_i. 
\end{align}
\addr{Note} that if $x_i=0$ is the status quo for all players, this \addr{condition} is similar to \addr{the one in} the fractional threshold model in that a player's action may cause other players' actions to change.
 Specifically, player $i$ takes action 1 if $u_i(0;{\bf{x}}_{-i})$ < $u_i(1;{\bf{x}}_{-i})$, which is rewritten as
\begin{align}
    m_i > \frac{1}{\gamma}\left( \frac{1}{2}-\alpha\right) \equiv \theta.
    \label{eq:absolute_threshold}
\end{align}
The most important difference from the threshold condition in \new{the fractional model}, Eq.~\eqref{eq:fractional_threshold}, is that condition \eqref{eq:absolute_threshold} is independent of the degree $k_i$. A player will be activated if the number of active neighbors exceeds a certain threshold $\theta$.
We call this class of contagion models \emph{asolute-threshold models}~\citep{Granovetter1978,Karimi2013PhysicaA,unicomb2021dynamics}.
\begin{definition}
If players in a network change their strategies according to \eqref{eq:absolute_threshold}, the resulting cascading behavior is called an absolute-threshold contagion.
\end{definition}

\section{Analysis of contagion dynamics}

As discussed above, \addr{the best pure strategies in the two types of games fall into two different classes: fractional and absolute threshold rules}. In this section, we show how the contagion dynamics in these two classes of games can be analyzed within a unified framework.

\subsection{A message-passing approach}

 To study the dynamics of cascading behavior in an analytically tractable and intuitive manner, we employ a message-passing approach. 
 Below we present a message-passing equation that allows us to describe the dynamics of contagion in both classes of network games. We further propose analytical conditions with which we can identify under what circumstances a global cascade can take place.

\subsubsection{Message-passing equation}

 \begin{figure}[tb]
     \centering
     \includegraphics[width=6cm]{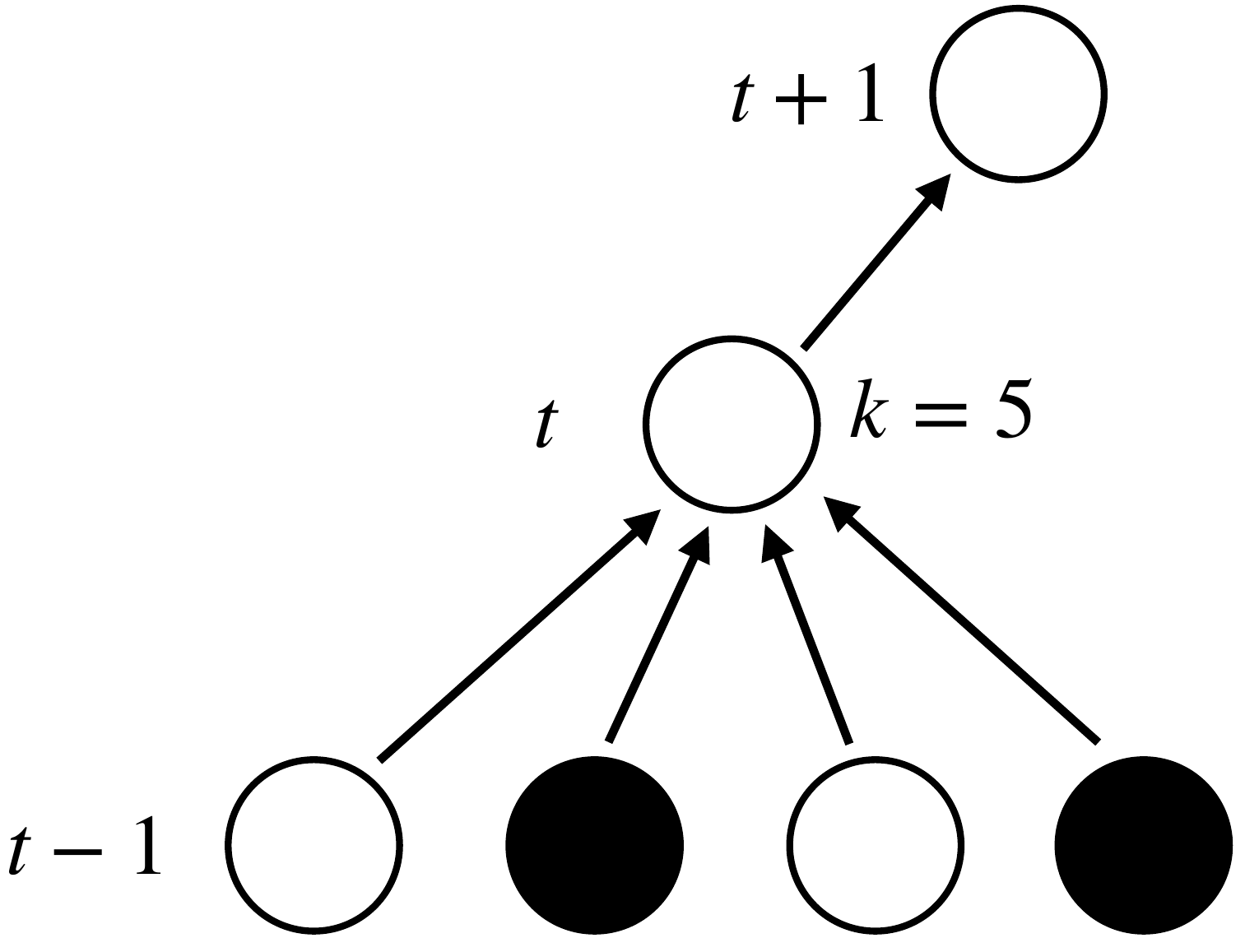}
     \caption{Schematic of the message-passing method.}
     \label{fig:schematic}
 \end{figure}
 
Let $\rho_0\in[0,1)$ denote the fraction of seed players among all players. That is, $\lfloor\rho_0 N\rfloor$ seed players are randomly selected from the population.  
 Let $\qt\in[0,1]$ be the probability that a randomly selected neighbor of a node is active at time step $t$. In the message-passing approach, the states of $k-1$ neighbors out of $k$ neighbors are taken into account in calculating the probability $\qt$, assuming that one neighbor that is not included in the $k-1$ neighbors is still inactive (Fig.~\ref{fig:schematic}). This allows us to respect the directionality in the spreading of influence; a player would become active after being influenced by (or receiving ``messages'' from) the $k-1$ neighbors, and the influence (or the ``message'') is passed on to a neighbor who is not yet active. Assuming that the network is locally tree-like, we calculate $\qt$ by iterating the following recursion equation (i.e., the message-passing equation):
 \begin{align} 
 \qt &= \rho_0 + (1-\rho_0)\sum_{k=1}^{\infty}\frac{k}{z}p_k \sum_{m=0}^{k-1} \binom{k-1}{m}  \qtm^m (1-\qtm)^{k-1-m}F\left( m,k \right),\notag 
\\
 &\equiv G(q_{t-1}),  \;\; \text{for }t=1,2,\ldots,
 \label{eq:recursion_q}
 \end{align}
 where $\binom{k-1}{m}\equiv (k-1)!/[k!(k-1-m)!]$, and $p_k$ denotes the degree distribution, i.e., the probability that a randomly chosen player has exactly $k$ neighbors.
 $z\equiv \langle k\rangle=\sum_{k=0}^\infty k p_k$ is the mean degree of the network, so $\frac{k}{z}p_{k}$ represents the probability that a randomly chosen neighbor of a player has degree $k$.\footnote{\add{If the network is directed, the summation in Eq.~\eqref{eq:recursion_q} should be taken from $m=0$ to $k$ since each edge has an intrinsic direction~\citep{Brummitt2015PRE}.
 \cite{lopez2012GEBinfluence} studies the impact that the presence of in/out-degree correlations (i.e., assortativity) would have on diffusion. 
 }}
 $F$ is a real-valued function bounded on $[0,1]$, which is called the \emph{response function}. We impose the following assumption on $F$:
 \begin{assumption}
 $F:\mathbb{Z}_{\geq 0}\times\mathbb{Z}_{>0}\to [0,1]$ has the following properties: \\
  For $k= 1,2,\ldots$,
 \begin{enumerate}[(i)]
     \item $F(0,k)=0$,
     \item  $F(m,k)$ is \addr{increasing} in $m$, i.e., $F(m_1,k)\leq F(m_2,k)$ for all $m_1,m_2\in \mathbb{Z}_{\geq 0}$ such that $m_1\leq m_2$.
 \end{enumerate}
\label{ass:F}
 \end{assumption}
 Part (i) of the assumption indicates that players will never get activated if there are no active neighbors, while part (ii) ensures that players are more likely to be active as they have more active neighbors for a given number of neighbors.

 Specifically, the response function in \new{the fractional threshold model} is defined as
 \begin{align}
 F(m,k) \equiv {F}^{\rm frac}(m,k) =
     \begin{cases}
      1 \;\; \text{ if }\; \frac{m}{k}>\phi, \\ 
      0 \;\; \text{ otherwise}.
     \end{cases}
     \label{eq:fraction_rule}
\end{align}
Clearly, if $\phi\geq 0$, the response function satisfies Assumption~\ref{ass:F}. 
In \new{the absolute threshold model}, the response function is given by
\begin{align}
 F(m,k) \equiv {F}^{\rm abs}(m) = 
     \begin{cases}
      1 \;\; \text{ if }\; m >\theta, \\
      0 \;\; \text{ otherwise}.
     \end{cases}
     \label{eq:absolute_rule}
 \end{align}
Again, if $\theta\geq 0$, this response function satisfies Assumption~\ref{ass:F}.

 For a given $q_t$, we can calculate the average fraction of active players at time $t$, denoted by $\rho_t$, as
\begin{align}
 \rho_t = \rho_0 + (1-\rho_0)\sum_{k=1}^{\infty}p_k \sum_{m=0}^k \mathcal{B}_m^k(q_t)F\left( m,k \right),  \label{eq:rhot}  
\end{align} 
 where $\mathcal{B}_m^k(q)\equiv \binom{k}{m}  q^m (1-q)^{k-m}$ denotes a binomial distribution with parameter $q$.

\subsubsection{Relation to a more general approximation method}

\cite{gleeson2011high,gleeson2013binary} argues that a general approximation method based on \emph{approximate master equations} (AMEs) is far more accurate than the mean-field approximations in a wide variety of contagion models. Let $s_{k,m}(t)$ denote the fraction of $k$-degree nodes that are inactive and have $m$ active neighbors at time $t$, where the average fraction of active nodes is given by $\rho(t) = 1- \sum_k p_k\sum_{m=0}^k s_{k,m}(t)$. Using the AME formalism, we can express the dynamics of $s_{k,m}$ in continuous time as follows:
\begin{align}
    \frac{d}{dt} s_{k,m} = -F(m,k)s_{k,m} - \beta\cdot (k-m)s_{k,m} + \beta\cdot(k-m+1)s_{k,m-1},
    \label{eq:AME}
\end{align}
where $\beta \equiv [\sum_k p_k\sum_{m=0}^k (k-m)F(m,k)s_{k,m}]/[\sum_k p_k\sum_{m=0}^k (k-m)s_{k,m}]$ denotes the average transition rate at which a neighbor of an inactive node becomes active. In the AME method, we need to consider three factors that can change $s_{k,m}$. The first term on the RHS of \eqref{eq:AME} represents the fraction of $k$-degree nodes having $m$ active neighbors that newly change their state from inactive to active. The second term represents the fraction of nodes that leave the $s_{k,m}$ class because one of the $(k-m)$ inactive neighbors is newly activated and the number of active neighbors becomes $m+1$. The third term represents the fraction of nodes that newly enter the $s_{k,m}$ class from the $s_{k,m-1}$ class because the number of inactive neighbors changes from $k-m+1$ to $k-m$. \cite{gleeson2011high,gleeson2013binary} shows that a wide variety of contagion processes on networks can be described by the AME method almost exactly. The downside of the AME method, on the other hand, is that since the total number of combinations $(k,m)$ is $(k_{\rm max}+1)(k_{\rm max}+2)/2$, the number of equations grows quadratically with the maximum degree $k_{\rm max}$. Thus, solving the system of differential equations in the AME approach can be computationally expensive for well connected networks.  

Despite the increased number of equations to be solved in the AME formalism, it is shown that the message-passing equation~\eqref{eq:recursion_q} is directly derived from the system of AMEs as long as the threshold models of type \eqref{eq:fraction_rule} and \eqref{eq:absolute_rule} are considered. We summarize the equivalence between the message-passing and the AME methods in the following proposition:
\begin{proposition}[\citealt{gleeson2013binary}]
Suppose that response function $F$ is given by \eqref{eq:fraction_rule} or \eqref{eq:absolute_rule}. Given the initial fraction of active nodes at time $t=0$, denoted by $\rho(0)$, the system of differential equations in the AME formalism given by \eqref{eq:AME}, for $m =0,\ldots,k$ and $k=0,\ldots,k_{\rm max}$, reduces to the following system of two differential equations:
\begin{align}
    \frac{d}{dt}\rho &= v(q) - \rho, \label{eq:diff_mp}\\
    \frac{d}{dt} q &= g(q) - q,  \\
    \text{where } \hspace{2cm}\notag \\
      v(q) = \rho(0) &+ [1-\rho(0)]\sum_k p_k\sum_{m=0}^k \mathcal{B}_{m}^{k}(q)F(m,k), \\
    g(q) = \rho(0) &+ [1-\rho(0)]\sum_k \frac{kp_k}{z}\sum_{m=0}^{k-1} \mathcal{B}_{m}^{k-1}(q)F(m,k). \label{eq:AME_MP_derivation}
\end{align}
%and $\mathcal{B}_{m}^{k}(q) \equiv \binom{k}{m}q^m(1-q)^{k-m}$ is shorthand for the binomial distribution.
\end{proposition}
\begin{proof}
 See the section VII and Appendix F of \cite{gleeson2013binary}.
\end{proof}
 Eqs.~\eqref{eq:diff_mp} -- \eqref{eq:AME_MP_derivation} indicate that the dynamics captured by the message passing equation \eqref{eq:recursion_q} are indeed equivalent to that described by the AMEs in the class of threshold models that we consider. This suggests that while the message-passing method is apparently similar to the conventional mean-field approximation in that the mean probability $q$ is the only variable in Eq.~\eqref{eq:recursion_q}, the message-passing equation essentially incorporates the transition of neighbors' states that is neglected in the conventional mean-field method.

 \subsection{Properties of the message-passing equation~\eqref{eq:recursion_q}}
 \subsubsection{Monotonicity}
 
  Given that the response function $F$ satisfies Assumption~\ref{ass:F}, we now characterize the function $G$ on $[0,1]$.  
  We first describe the boundary conditions of $G$.
 \begin{proposition}
 Let $G$ on $[0,1]$ be a real-valued function defined by \eqref{eq:recursion_q}. If $F:\mathbb{Z}_{\geq 0}\times\mathbb{Z}_{>0}\to [0,1]$ satisfies Assumption~\ref{ass:F} (i), then $G(0)=\rho_0$ and $G(1) \leq 1$ for $\rho_0\in [0,1)$.
  \label{prop:G_01}
 \end{proposition}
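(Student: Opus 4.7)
The plan is to evaluate $G$ directly at the two endpoints by exploiting the fact that the Bernoulli weights $\binom{k-1}{m}q^m(1-q)^{k-1-m}$ collapse to a single surviving term at $q=0$ and at $q=1$.

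First I would compute $G(0)$. Plugging $q_{t-1}=0$ into \eqref{eq:recursion_q}, the inner sum $\sum_{m=0}^{k-1}\binom{k-1}{m}0^m(1-0)^{k-1-m}F(m,k)$ leaves only the $m=0$ term (using the convention $0^0=1$), which equals $F(0,k)$. By Assumption~\ref{ass:F}(i), $F(0,k)=0$ for every $k\geq 1$, so the entire double sum vanishes and $G(0)=\rho_0$.

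Next I would compute $G(1)$. Plugging $q_{t-1}=1$ into \eqref{eq:recursion_q}, the factor $(1-q_{t-1})^{k-1-m}$ forces the only non-zero term of the inner sum to be $m=k-1$, giving $F(k-1,k)$. Thus
\begin{align*}
G(1) \;=\; \rho_0 + (1-\rho_0)\sum_{k=1}^{\infty}\frac{k}{z}p_k\,F(k-1,k).
\end{align*}
Since $F(m,k)\in[0,1]$ by hypothesis, $F(k-1,k)\leq 1$, so
\begin{align*}
G(1) \;\leq\; \rho_0 + (1-\rho_0)\sum_{k=1}^{\infty}\frac{k}{z}p_k \;=\; \rho_0 + (1-\rho_0)\cdot 1 \;=\; 1,
\end{align*}
where I used the identity $\sum_{k=0}^{\infty}\tfrac{k}{z}p_k = \tfrac{1}{z}\sum_{k=0}^{\infty}k p_k = 1$ (with the $k=0$ contribution being zero anyway), and the assumption $\rho_0<1$ to ensure the bound is achieved as a valid convex combination.

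There is no real obstacle here: the result is a one-line verification at each endpoint, relying only on part (i) of Assumption~\ref{ass:F} (not on monotonicity) and on the fact that $\{\tfrac{k}{z}p_k\}_{k\geq 1}$ is a probability distribution (the excess-degree distribution of a randomly chosen neighbor). The only subtlety worth noting in the write-up is the $0^0=1$ convention at $q=0$, which ensures the $m=0,\,k=1$ term is treated correctly.
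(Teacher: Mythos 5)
Your proof is correct and follows exactly the same route as the paper's: evaluate $G$ at each endpoint, where the binomial weights collapse to the single terms $m=0$ and $m=k-1$, then apply $F(0,k)=0$ and $0\leq F\leq 1$ together with the normalization $\sum_{k=1}^\infty \frac{k}{z}p_k=1$. No differences worth noting.
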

 \begin{proof}
  From the convention $0!=1$ and Assumption~\ref{ass:F} (i), it is straightforward to show that $G(0) = \rho_0 + (1-\rho_0)\sum_{k=1}^{\infty} \frac{k p_{k}}{z}\binom{k-1}{0}F(0,k) = \rho_0$. We also have $G(1) = \rho_0 + (1-\rho_0)\sum_{k=1}^{\infty} \frac{k p_{k}}{z}F(k-1,k)$. Since $\sum_{k=1}^{\infty} \frac{k p_{k}}{z}=1$ and $0\leq F\leq 1$, it is clear that $G(1)\leq 1$.
   The equality is attained if there is $\underline{k}>1$ such that $\sum_{k=\underline{k}}^{\infty} \frac{k p_{k}}{z}F(k-1,k)=1$. 
 \end{proof}
 The following proposition states that $G$ is monotone.
 \begin{proposition}
  If $F:\mathbb{Z}_{\geq 0}\times\mathbb{Z}_{>0} \to [0,1]$ satisfies Assumption~\ref{ass:F}~(ii), then $G$ is \addr{increasing} on $[0,1]$. 
  \label{prop:G_monotonicity}
 \end{proposition}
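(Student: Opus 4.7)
The plan is to reduce the monotonicity of $G$ to the monotonicity, for each $k\geq 1$, of the binomial average
\[
H_k(q) \;\equiv\; \sum_{m=0}^{k-1}\binom{k-1}{m}q^{m}(1-q)^{k-1-m}F(m,k),
\]
which is the expectation $\mathbb{E}[F(M,k)]$ for $M\sim\mathrm{Binom}(k-1,q)$. Since $G(q)=\rho_0+(1-\rho_0)\sum_{k\geq 1}\tfrac{kp_k}{z}H_k(q)$, the weights $(1-\rho_0)\tfrac{kp_k}{z}$ are nonnegative, and the series is dominated termwise by $\tfrac{kp_k}{z}$ (since $0\leq F\leq 1$), a nonnegative, dominated combination of non-decreasing functions is itself non-decreasing on $[0,1]$. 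So the entire argument reduces to the monotonicity of $H_k$.

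For the $H_k$ step, I would use a coupling/stochastic-dominance argument, which bypasses any termwise-differentiation bookkeeping. Fix $0\leq q_1\leq q_2\leq 1$ and take i.i.d.\ uniforms $U_1,\dots,U_{k-1}$ on $[0,1]$; set $M_i=\#\{j:U_j\leq q_i\}$ for $i=1,2$. Then $M_1\leq M_2$ pointwise, and by Assumption~\ref{ass:F}(ii), $F(M_1,k)\leq F(M_2,k)$ pointwise. Taking expectations yields $H_k(q_1)\leq H_k(q_2)$. The case $k=1$ is immediate, since $H_1(q)=F(0,1)=0$ by Assumption~\ref{ass:F}(i).

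As a more explicit alternative (and a useful sanity check producing the derivative of $G$ for later cascade-condition results), I could differentiate $H_k$ directly. Using the standard Pascal-type identities $m\binom{k-1}{m}=(k-1)\binom{k-2}{m-1}$ and $(k-1-m)\binom{k-1}{m}=(k-1)\binom{k-2}{m}$ and reindexing, one obtains
\[
H_k'(q) \;=\; (k-1)\sum_{m=0}^{k-2}\binom{k-2}{m}q^{m}(1-q)^{k-2-m}\bigl[F(m+1,k)-F(m,k)\bigr],
\]
which is nonnegative on $[0,1]$ by Assumption~\ref{ass:F}(ii). The advantage of the coupling route is that it requires no manipulation of factorials and handles the endpoints $q\in\{0,1\}$ uniformly.

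The only mild subtlety I foresee is justifying the passage from termwise monotonicity to monotonicity of the infinite sum; this is settled by the uniform bound $0\leq H_k\leq 1$ together with $\sum_{k\geq 1}\tfrac{kp_k}{z}=1$, which allows the series defining $G$ to be handled either by monotone convergence or by the observation that any convex combination of non-decreasing $[0,1]$-valued functions is non-decreasing. No further assumptions on the degree distribution $p_k$ are needed.
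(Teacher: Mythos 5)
Your proposal is correct, and your primary argument is genuinely different from the paper's. The paper proves monotonicity by differentiating $G$ on $(0,1)$, expanding the resulting sum term by term, and performing an Abel-type (summation-by-parts) rearrangement to show that the derivative equals $(1-\rho_0)\sum_{k}\frac{k}{z}p_k\sum_{s=0}^{k-2}\binom{k-1}{s}(k-1-s)q^{s}(1-q)^{k-2-s}[F(s+1,k)-F(s,k)]\geq 0$, then extends to the endpoints by continuity. Your coupling argument reaches the same conclusion via stochastic dominance of $\mathrm{Binom}(k-1,q)$ in $q$: it avoids all factorial bookkeeping, treats $q\in\{0,1\}$ on the same footing as interior points, and makes transparent that monotonicity of $F$ in $m$ is the only thing being used. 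What the paper's route buys, and what your coupling route does not, is the explicit closed form of $G^{\prime}(q)$, which is not a side remark but the substance of the generalized first-order cascade condition and the input to the second-derivative computation for the extended condition; your ``alternative'' differentiation of $H_k$ reproduces exactly that formula (your $(k-1)\binom{k-2}{m}$ equals the paper's $\binom{k-1}{m}(k-1-m)$), so nothing is lost. Two trivial remarks: for $k=1$ you do not need Assumption~\ref{ass:F}(i) (which is not among the hypotheses of this proposition) since $H_1$ is constant and hence non-decreasing whatever the value of $F(0,1)$; and the passage from termwise to global monotonicity needs no convergence argument at all, since for $q_1\leq q_2$ the inequality $H_k(q_1)\leq H_k(q_2)$ sums directly against the nonnegative weights.
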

\begin{proof}
 See, \ref{sec:proof_G_mononone}.
\end{proof}
 Propositions \ref{prop:G_01} and \ref{prop:G_monotonicity} can be summarized in the following corollary.
 \begin{corollary}\label{col:G_01}
  Let $F:\mathbb{Z}_{\geq 0}\times\mathbb{Z}_{>0} \to [0,1]$ satisfy Assumption~\ref{ass:F}. Then, $G:[0,1]\to[0,1]$ is a monotone map.
  \end{corollary}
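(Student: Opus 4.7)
The plan is to derive the corollary directly from Propositions~\ref{prop:G_01} and~\ref{prop:G_monotonicity}, which together supply exactly the two ingredients we need: boundary control and monotonicity. Since Assumption~\ref{ass:F} bundles (i) and (ii), both propositions apply simultaneously, so the proof is essentially just a matter of checking that $G$ sends $[0,1]$ into itself and stitching the two statements together.

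First I would use Proposition~\ref{prop:G_monotonicity} to conclude that $G$ is non-decreasing on $[0,1]$. Combined with Proposition~\ref{prop:G_01}, this yields $G(0)=\rho_0$ and $G(1)\leq 1$. Then for any $x\in[0,1]$, monotonicity gives
\begin{align*}
\rho_0 \;=\; G(0) \;\leq\; G(x) \;\leq\; G(1) \;\leq\; 1.
\end{align*}
Since $\rho_0\in[0,1)\subset[0,1]$, we obtain $G(x)\in[0,1]$, establishing that $G$ is indeed a self-map on $[0,1]$. Monotonicity on $[0,1]$ is already in Proposition~\ref{prop:G_monotonicity}, so $G$ qualifies as a monotone map.

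There is effectively no obstacle here; the only thing worth being careful about is ensuring that the codomain statement $G:[0,1]\to[0,1]$ is genuinely established (not just $G\leq 1$ at the endpoint), which is why the monotonicity step must be invoked to transfer the endpoint bound into a pointwise bound. Since the corollary is stated cleanly as a consequence of the two propositions, the write-up can be kept to two or three lines.
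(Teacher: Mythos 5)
Your proof is correct and takes essentially the same route as the paper, which presents the corollary as an immediate summary of Propositions~\ref{prop:G_01} and~\ref{prop:G_monotonicity} without further argument. Your explicit step of using monotonicity to transfer the endpoint bounds $G(0)=\rho_0\geq 0$ and $G(1)\leq 1$ into the pointwise bound $G(x)\in[0,1]$ is exactly the detail that makes the self-map claim rigorous.
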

%  \begin{proof}
%   From Propositions \ref{prop:G_01} and \ref{prop:G_monotonicity}, we know that $G(0)=\rho_0$, $G(1)\leq 1$ and $G(q)$ is non-decreasing in $\{q:q\in[0,1]\}$. Thus, it is clear that $0\leq \rho_0=G(0)\leq G(q)\leq G(1)\leq 1$ for all $q\in[0,1]$. This establishes that $G$ is a monotone map of $[0,1]$ into itself.
%  \end{proof}

 \subsubsection{Convergence}
 \addr{
 We now examine convergence of the sequence $\{q_t\}_{t\in \mathbb{Z}_{\geq 0}}$ generated by iterating the recursion equation~\eqref{eq:recursion_q} from an initial value $q_0\in[0,\rho_0]$.
 To this end, we use ``Kleene's fixed point theorem''~\citep{baranga1991contraction,stoltenberg1994mathematical,kamihigashi2015application}.
 Let $(P,\leq)$ be a partially ordered set. 
 $(P,\leq)$ is \emph{$\omega$-complete} if every increasing sequence $\{x_n\}_{n\in \mathbb{Z}_{\geq 0}}$ in $P$ such that $x_n\leq x_{n+1}$ has a supremum in $P$.
 A function $f:P\to P$ is \emph{$\omega$-continuous} if $f\left(\sup_{n\in\mathbb{Z}_{\geq 0}}x_n\right) = \sup_{n\in\mathbb{Z}_{\geq 0}}f(x_n)$ holds for every increasing sequence $\{x_n\}_{n\in \mathbb{Z}_{\geq 0}}$ in $P$ having a supremum\footnote{It is clear that an $\omega$-continuous function $f$ is increasing: $f(x)\leq f(y)$ whenever $x\leq y$.}.
 \begin{theorem}[Kleene's fixed point theorem]
  Let $(P,\leq)$ be an \emph{$\omega$-complete} partially ordered set, and let $f:P\to P$ be an \emph{$\omega$-continuous} function.
  If $x\in P$ is such that $x\leq f(x)$, 
  then $x^\ast\equiv \sup_{n\in\mathbb{Z}\geq 0}f^n(x)$ is the least fixed point of $f$ in $P$.
  \label{th:kleene}
 \end{theorem}
 \begin{proof}
  See \citet[p.~24]{stoltenberg1994mathematical}.
 \end{proof}
 Using Kleene's fixed point theorem, we show that the sequence $\{q_t\}_{t\in \mathbb{Z}_{\geq 0}}$ generated by the message-passing equation~\eqref{eq:recursion_q} for a given $q_0\in[0,\rho_0]$ converges to the least fixed point of $G$. 
 \begin{theorem}
 Let $F:\mathbb{Z}_{\geq 0}\times\mathbb{Z}_{>0} \to [0,1]$ satisfy Assumption~\ref{ass:F}. 
 Then for any initial value $q_0\in [0, \rho_0]$ with $0\leq \rho_0< 1$, 
 the sequence $\{q_t\}_{t\in\mathbb{Z}_{\geq 0}}$ generated by the message-passing equation \eqref{eq:recursion_q} converges to the least fixed point $q^* = G(q^*)$, where $q^{*}=\sup_{t\in\mathbb{Z}_{\geq 0}}G^t(q_0)\in[0,1]$.
\label{th:fixed_point}
\end{theorem}
\begin{proof}
 See \ref{sec:proof_fixed_point}.
\end{proof}
}

%  \begin{figure}[tb]
%      \centering
%      \includegraphics[width=12cm]{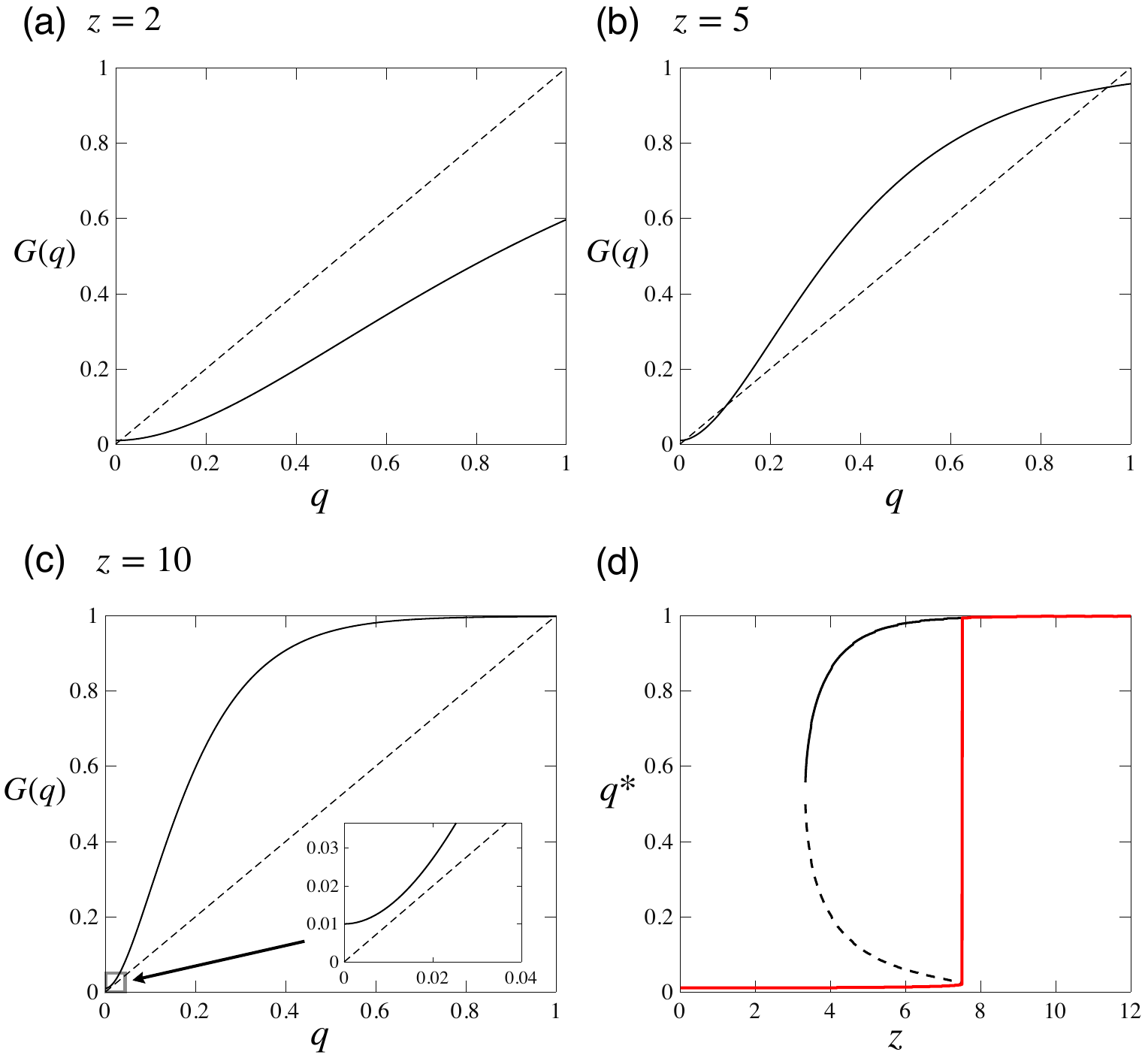}
%      \caption{Fixed points of recursion equation \eqref{eq:recursion_q} for \new{the absolute threshold model}. $\rho_0=0.01$ and $\theta=1.5$. The mean degree $z$ is (a) $2$, (b) $5$ and (c) 10. See the caption of Fig.~\ref{fig:bifurcation_Watts} for details.}
%      \label{fig:bifurcation_Grano}
%  \end{figure}

  Intuitively, the convergence of a monotonic sequence $(q_0,q_1,\ldots)$ to $q^*$ suggests that the influence of seed players spreads gradually through network edges, and that the probability of a randomly chosen neighbor being active will reach a steady-state equilibrium. 
  In fact, depending on the structure of the network, $q^*$ may not be the unique fixed point of $G$. To illustrate this, we use standard \ER random networks in which two nodes are connected with a constant probability $p$~\citep{Erdos1959PublMath}. 
  While this class of networks is not necessarily realistic, it has been extensively used in network analysis as an agnostic benchmark in which no structural information is presumed (c.f., \citealt{Watts2002,Jackson2008book,lelarge2012diffusion}). In particular, \ER networks have several desirable properties for the analysis of contagion: i) the average degree $z$ is the only parameter that describes the connectivity of an \ER network, so it is straightforward to quantify the effect of network connectivity on the spreading of contagion. ii) For a given finite average degree $z<\infty$, an \ER network becomes sparse as the number of nodes $N$ goes to infinity since $p\equiv z/(N-1)$. The sparsity ensures that the presence of local cycles and clusters is asymptotically negligible as $N\to \infty$,\footnote{See, \cite{newman2018book2nd}, ch.~11.} which is compatible with the assumption that the network is locally tree-like.   

\begin{figure}[tb]
     \centering
     \includegraphics[width=12cm]{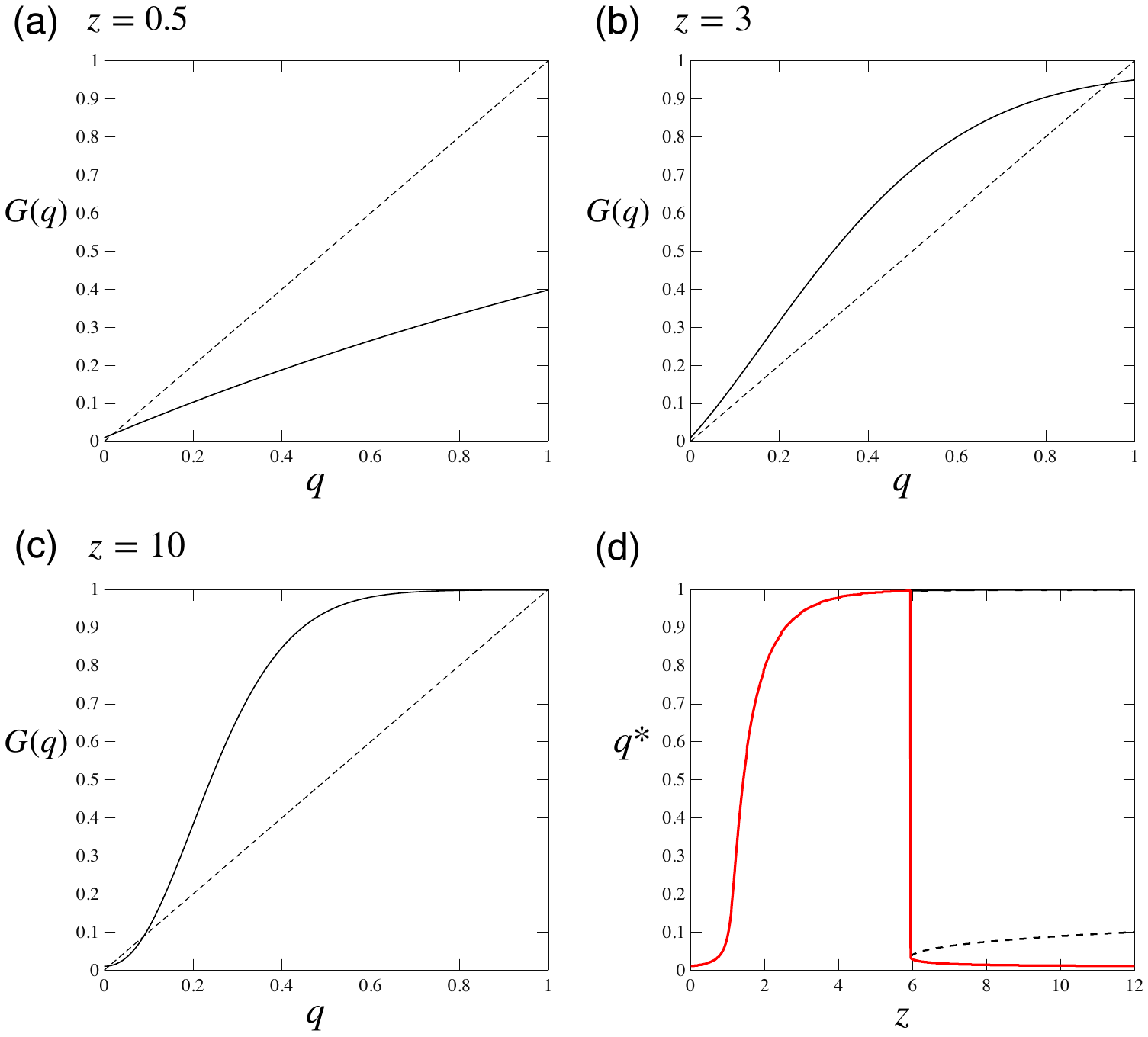}
     \caption{Fixed points of recursion equation \eqref{eq:recursion_q} for \new{the fractional threshold model}. We assume \ER random networks with $\rho_0=0.01$ and $\phi=0.2$. The mean degree $z$ is (a) $0.5$, (b) $3$, and (c) 10. Solid and dashed lines denote $G(q)$ and $45$-degree line, respectively.
     In panel (d), a bifurcation diagram of fixed points is shown. Black solid and dashed lines denote stable and unstable fixed points, respectively. Red solid line denotes the fixed point  $q^*$, defined in Theorem~\ref{th:fixed_point}. }
     \label{fig:bifurcation_Watts}
 \end{figure}
 
  Figs.~\ref{fig:bifurcation_Watts} and \ref{fig:bifurcation_Grano} in Online Appendix illustrate how the number of fixed points of $G$ depends on the average degree. 
  \add{In each panel of a--c in these figures, the curve (solid line) represents the function $G(q)$ for a given mean degree annotated at the top. 
  Note that the degree distribution $p_k$ is given by a Poisson distribution with mean $z$ for \ER networks, i.e., $p_k = z^ke^{-z}/k!$. 
  Note that $G(1)<1$ holds for networks with relatively small $z$ since we have $G(1) = \rho_0 + (1-\rho_0)\sum_{k=1}^{\infty} \frac{k p_{k}}{z}F(k-1,k)$ from the proof of Proposition~\ref{prop:G_01}.
  When $z$ is large enough, we would have a minimum degree $\underline{k}$ such that $p_k=0$ for all $k\in\{k:k<\underline{k}\}$ and $\sum_{k=\underline{k}}^{\infty} \frac{k p_{k}}{z}F(k-1,k)=1$, thereby leading to $G(1)=1$. However, if there are nodes having small $k$ such that $k<\underline{k}$ (i.e., $p_k>0$ for some $k \in \{k:k<\underline{k}\}$), then we have $G(1)<1$, in which case $q=1$ is not a fixed point of $G$. Indeed, the latter case corresponds to highly sparse networks in which the fraction of the largest connected component is less than one, so the influence of initial seeds would not reach isolated/disconnected components. 
  }

  The bifurcation diagram (Figs.~\ref{fig:bifurcation_Watts}d and \ref{fig:bifurcation_Grano}d) shows that the number of fixed points varies between 1 and 3 as the connectivity of network changes. 
  \addr{As shown in Theorem~\ref{th:fixed_point}, the fixed point $q^*$ obtained by iterating the recursion equation from $q_0\in [0,\rho_0]$ is the least fixed point of $G$ (Red solid line in Figs.~\ref{fig:bifurcation_Watts}d and \ref{fig:bifurcation_Grano}d).}
In equilibrium, it is optimal for all players not to change their strategies, i.e., a Nash equilibrium is attained for every pair, so no further contagion occurs. The steady-state probability of a randomly chosen player being active is thus given by
 \begin{align} 
 %\rho^* = \rho_0 + (1-\rho_0)\sum_{k=1}^{\infty}p_k \sum_{m=0}^k \binom{k}{m} (q^*)^m (1-q^{*})^{k-m}F\left( m,k \right).
 \rho^* = \rho_0 + (1-\rho_0)\sum_{k=1}^{\infty}p_k \sum_{m=0}^k \mathcal{B}_m^k(q^*)F\left( m,k \right).
 \label{eq:rho_ss}
 \end{align}
 If there are a sufficiently large number of players, $\rho^*$ can be interpreted as the share of active players in a Nash equilibrium.
  \add{Now we define \emph{global cascade} as follows:
 \begin{definition}
 If $\rho^\ast \gg \rho_0$, then the diffusion process is called global cascade.
 \end{definition}
 }

Recall that \cite{Watts2002} focuses on vanishingly small seeds (i.e., $\rho_0=0$).
The generalization to the case of $\rho_0>0$ would be useful when one investigates whether an activity/information that has already gained some popularity would further spread in a social network.
For example, given the fact that the number of active users of Slack exceeds 10 million, the question is: Will the population of Slack users increase further through a social network of friends/colleagues?
To formally answer this question in the cascade model, it is more natural to assume that the initial seed fraction $\rho_0$ is positive rather than 0. 
On the other hand, recall that the irreversibility of action is \emph{assumed} for initial seeds.
Practically, the irreversibility assumption for a finite fraction of seeds would be justified even when the seeds may possibly revert their action, because if some of the ``original'' seeds revert their action to be inactive, then we could consider the remaining active fraction as the ``core'' seed fraction $\rho_0$\footnote{For example, the number of ``original'' seeds correspond to the cumulative number of Slack users up to $t=0$, while the number of ``core'' seeds is given by the number of active users at $t=0$. 
One could also introduce a possibility that active nodes may be forced to become inactive, but it is beyond the scope of this paper and left for future research. \cite{kobayashi2015trend} and \cite{Ruan2015PRL} study fractional threshold models in which some fraction of nodes are not responsive to the neighbor's states.}.

 \subsection{Cascade conditions}
 
 \subsubsection{Generalized first-order cascade \addr{(GFC)} condition}
 The message-passing equation~\eqref{eq:recursion_q} can be viewed as a difference equation, whose stability is related to the condition under which global cascades can occur. Specifically, if the derivative of $G$ evaluated at the initial value $q_0=\rho_0$ is greater than 1, then there will be no stable fixed point such that $\rho_0=G(\rho_0)$. In this case, the fixed point $q^*$ will be apart from $\rho_0$, and therefore a global contagion happens, resulting in $q^*>\rho_0$. 
 \begin{theorem}
Let $F:\mathbb{Z}_{\geq 0}\times\mathbb{Z}_{>0} \to [0,1]$ satisfy Assumption~\ref{ass:F}.  If $\lim_{q\downarrow\rho_0} G^\prime(q)>1$, then there is no stable fixed point such that $\rho_0=G(\rho_0)$ for any $\rho_0\in[0,1)$.
\label{th:cascade_condition1st}
 \end{theorem}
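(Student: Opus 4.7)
The plan is to argue by instability: under the hypothesis, any starting value arbitrarily close to $\rho_0$ from above produces a monotonically increasing iterate sequence that must eventually leave a fixed right-neighborhood of $\rho_0$, contradicting Lyapunov stability. First dispense with the trivial case $G(\rho_0)\ne \rho_0$ (then $\rho_0$ is not even a fixed point). So assume $G(\rho_0)=\rho_0$.

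Next, observe that $G$ is a convergent power series in $q$ (a finite linear combination, over $k$, of binomial-weighted polynomials in $q$), hence smooth on $[0,1]$ with continuous derivative, so $\lim_{q\downarrow\rho_0}G'(q)=G'(\rho_0)>1$. By continuity there exists $\delta>0$ such that $G'(q)>1$ on $[\rho_0,\rho_0+\delta)$. Then for any $q_0\in(\rho_0,\rho_0+\delta)$, the mean value theorem on $[\rho_0,q_0]$ yields
\begin{equation*}
G(q_0)-\rho_0 \;=\; G(q_0)-G(\rho_0) \;=\; G'(\xi)(q_0-\rho_0) \;>\; q_0-\rho_0,
\end{equation*}
so $G(q_0)>q_0$. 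By monotonicity of $G$ (Proposition~\ref{prop:G_monotonicity}), the iterates $q_0,q_1=G(q_0),q_2=G(q_1),\ldots$ form a strictly increasing sequence.

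I claim this sequence must exit $[\rho_0,\rho_0+\delta)$ in finitely many steps. Otherwise it would be bounded and monotone, converging to some $q^{**}\in(\rho_0,\rho_0+\delta]$ satisfying $G(q^{**})=q^{**}$; but applying MVT again on $[\rho_0,q^{**}]$ gives $q^{**}-\rho_0=G'(\xi')(q^{**}-\rho_0)>q^{**}-\rho_0$, a contradiction. Consequently, for every $\delta'\in(0,\delta)$ one can pick an initial condition within distance $\delta'$ of $\rho_0$ whose trajectory eventually lies at distance at least $\delta$ from $\rho_0$; hence $\rho_0$ fails the Lyapunov criterion for a stable fixed point, proving the theorem.

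The main obstacle I anticipate is not the analysis itself (the MVT argument is clean) but pinning down two points of care: verifying that $G$ is genuinely differentiable so that $\lim_{q\downarrow\rho_0}G'(q)$ coincides with a classical one-sided derivative (handled by noting that $G$ is a power series in $q$ for each fixed response function $F$), and stating precisely the notion of stability being ruled out — since the iteration map $G$ defined by \eqref{eq:recursion_q} is only meaningfully iterated on $[0,1]$ and, by Theorem~\ref{th:fixed_point}, the natural iterates approach $\rho_0$ only from above, the relevant instability is one-sided but still suffices to exclude any stable fixed point equal to $\rho_0$.
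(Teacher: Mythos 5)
Your argument is correct, but it follows a genuinely different route from the paper's, and the difference is worth noting. The paper's proof splits into two cases. For $\rho_0=0$ it argues much as you do in spirit: $\lim_{\epsilon\downarrow 0}G(\epsilon)/\epsilon>1$ gives $\epsilon<G(\epsilon)$ near the fixed point, and monotonicity of $G$ then drives the iterates upward. For $\rho_0>0$, however, the paper proves something stronger than instability: it shows that under the hypothesis $\rho_0$ cannot be a fixed point at all. The mechanism is that $G(0)=\rho_0$ (Proposition~\ref{prop:G_01}) together with monotonicity forces $G$ to be constant on $[0,\rho_0]$ whenever $G(\rho_0)=\rho_0$, whence $\lim_{q\uparrow\rho_0}G'(q)=0$, and continuity of $G'$ on $(0,1)$ (Eq.~\ref{eq:Gprime_final}) then gives $\lim_{q\downarrow\rho_0}G'(q)=0$, contradicting the hypothesis. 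You instead treat both cases uniformly: assuming $G(\rho_0)=\rho_0$, the mean value theorem on an interval where $G'>1$ shows $G(q)>q$ there, rules out any other fixed point in that interval, and forces every orbit started in it to escape --- a cleaner and more quantitative instability statement (escape from a fixed $\delta$-neighborhood) than the paper's chain $\epsilon<G(\epsilon)<G(G(\epsilon))<\cdots$, which by itself does not exclude convergence to a nearby fixed point. What your route gives up is the structural fact, used implicitly elsewhere in the paper, that for $\rho_0>0$ the cascade condition is incompatible with $\rho_0$ being a fixed point in the first place. One small technical correction: $G$ is not ``a finite linear combination over $k$'' --- the sum over $k$ is an infinite series of polynomials --- so identifying $\lim_{q\downarrow\rho_0}G'(q)$ with a classical derivative and invoking continuity of $G'$ requires term-by-term differentiation, which is justified when the degree distribution has a finite second moment (automatic for the Poisson case used throughout); the paper's own appendix makes the same implicit assumption, so this does not disadvantage your argument relative to theirs.
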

 \begin{proof}
\addr{  See \ref{sec:proof_cascade_1st}.}
 \end{proof}
 
 We call the condition $\lim_{q\downarrow\rho_0} G^\prime(q)>1$ the \emph{generalized first-order cascade condition} since it is essentially a generalized version of the standard cascade condition proposed by \cite{Watts2002}, \cite{Gleeson2007}, and \cite{lelarge2012diffusion}:
 \begin{definition}
 For a given $\rho_0\in[0,1)$, the condition $\lim_{q\downarrow\rho_0} G^\prime(q)>1$ is called the generalized first-order cascade \addr{(GFC)} condition, which is rewritten as
 \begin{align}
      \sum_{k=2}^{\infty}\frac{k}{z}p_{k} \sum_{s=0}^{k-2}(k-1-s)\mathcal{B}_{s}^{k-1}(\rho_0)\left[F(s+1,k)-F(s,k)\right] >  1. 
    \label{eq:first-order}
 \end{align}
 A parameter space that satisfies this condition is called the first-order cascade region.
 \end{definition}
 See Eq.~\eqref{eq:Gprime_final} for a derivation of \eqref{eq:first-order}.
 Indeed, if $\rho_0=0$, the \addr{GFC} condition reduces to the conventional cascade condition~\citep{Watts2002,Gleeson2007,lelarge2012diffusion}:
 \begin{align}
     \sum_{k=2}^{\infty}\frac{k}{z}p_{k}(k-1)F(1,k) >  1. 
    \label{eq:first-order_standard}
 \end{align}
 \cite{Watts2002} obtains \eqref{eq:first-order_standard} using a generating-function approach, and \cite{Gleeson2007} derive it based on a message-passing method. They evaluate the derivative at $q=0$, assuming that $\rho_0$ is small enough, while condition \eqref{eq:first-order} allows us to evaluate the derivative at $q=\rho_0> 0$. 
 %As we show in section \ref{sec:cascade_region}, this generalization ensures that the condition can be applied to cases in which the seed fraction $\rho_0$ is not necessarily small enough.
 
%  As is pointed out by \cite{Watts2002}, an interesting implication from the cascade condition is that a global cascade can occur even if $\rho_0=0$, in which the fraction of seed nodes in a sufficiently large network is vanishingly small. The first-order cascade condition may be satisfied even if $\rho_0=0$, meaning that  a vanishingly small seed could lead a finite fraction of players in an infinitely large network to become active.

 \subsubsection{Generalized extended cascade \addr{(GEC)} condition}
 
 Since condition~\eqref{eq:first-order} is based on the first-order derivative of $G(q)$ at $\rho_0$, the nonlinearity of $G$ near $q=\rho_0>0$ is ignored. This implies that the first-order condition may not always be able to detect a large-size cascade that could happen even if $G^\prime(\rho_0)\leq 1$ for $\rho_0>0$. 
 As suggested by \cite{Gleeson2007}, the accuracy of the condition would be improved if we exploit a second-order approximation.

 Let us rewrite $G(q)$ as $G(q) = \rho_0 + (1-\rho_0)S(q)$, where $S(q)$ denotes the nonlinear term. The second-order Taylor expansion of $S(q)$ about $\rho_0$ leads to $S(q) = S(\rho_0) + S^\prime(\rho_0)(q-\rho_0) + \frac{1}{2}S^{''}(\rho_0)(q-\rho_0)^2 = C_o + C_1(q-\rho_0) + C_2(q-\rho_0)^2,$ where $C_0\equiv S(\rho_0)$, $C_1\equiv S^\prime(\rho_0)$ and $C_2\equiv \frac{1}{2}S^{''}(\rho_0)$. 
 Note that the first derivative of $S$ on $q\in(0,1)$ is $S^\prime(q)= (1-\rho_0)^{-1}G^{\prime}(q)$ (see, Eq.~\ref{eq:Gprime_final}), and the second derivative is given by
 \begin{align}
     S^{''}(q) = &\sum_{k=3}^{\infty}\frac{k}{z}p_{k}\sum_{s=0}^{k-3}\binom{k-1}{s}(k-1-s)(k-2-s)q^s(1-q)^{k-3-s} \notag\\
     &\;\;\times \left[F(s+2,k)-2F(s+1,k)+F(s,k)\right]. 
     \label{eq:S_second}
 \end{align}
  See \ref{sec:second_derivative} for a derivation of \eqref{eq:S_second}.
 Up to the second-order approximation, $q=G(q)$ does not have a real root around $\rho_0\in(0,1)$ if \addr{its discriminant $D$ evaluated at $q=\rho_0$ is negative:} 
 \begin{align}
     \addr{D|_{q=\rho_0}\equiv\;} h_1^2 -4h_0h_2 < 0,
     \label{eq:cascade_condition2nd}
 \end{align}
 where $h_0\equiv\rho_0+(1-\rho_0)(C_0-C_1\rho_0+C_2\rho_0^2)$, $h_1\equiv (1-\rho_0)(C_1-2\rho_0C_2)-1$, and $h_2 \equiv (1-\rho_0)C_2$. By the continuity of $S^\prime$ and $S^{''}$ on $[0,1]$, this condition also holds for $\rho_0=0$ once we redefine as $C_1\equiv \lim_{q\downarrow 0}S^\prime (q)$ and $C_2\equiv (1/2)\lim_{q\downarrow 0}S^{''} (q)$.    
 Then, we can define the \emph{generalized extended cascade condition} as follows:
 \addr{
 \begin{definition}
 For a given $\rho_0\in[0,1)$, the generalized extended cascade \addr{(GEC)} condition is given by:
 \begin{align}
 \lim_{q\downarrow \rho_0}G^\prime (\rho_0)>1 \;\;\text{or}\;\; D|_{q=\rho_0}<0.
 \label{eq:GEC_definition}
 \end{align}
 A parameter space that satisfies \eqref{eq:GEC_definition} is called the extended cascade region.
 \end{definition}
 }
 \addr{The GEC condition is a generalized version of the extended cascade condition proposed by \cite{Gleeson2007}, which is recovered by setting $\rho_0=0$.
 We will show in section~\ref{sec:numerical_monoplex} that the \addr{GEC} condition predicts the simulated cascade region almost exactly.
 Note that in the GEC condition, both the first- and second-order approximations are taken into account to more accurately detect the parameter space in which a large-size cascade can occur. 
 Intuitively, the GFC condition (i.e., $\lim_{q\downarrow \rho_0}G^\prime (\rho_0)>1$) indicates whether the slope of $G(q)$ is larger than 1 at $q=\rho_0$ (Fig.~\ref{fig:bifurcation_Watts}). 
 In the GEC condition, on the other hand, a second-order approximation is also used to check whether there exists a real root of $G(q)-q=0$ around $\rho_0$ up to the second order (Eq.~\ref{eq:cascade_condition2nd}). 
 This improves the accuracy of the cascade region since it allows us to capture a situation where the GFC condition is not satisfied while a global cascade occurs, which could happen only if $D|_{q=\rho_0}<0$.
 This suggests that the GEC condition~\eqref{eq:GEC_definition} is necessary to ensure that $q^\ast$ does not exist around $\rho_0$ up to the second order.
 }

 \subsection{A mean-field approximation}\label{sec:naive}

 \begin{figure}[tb]
     \centering
     \includegraphics[width=12cm]{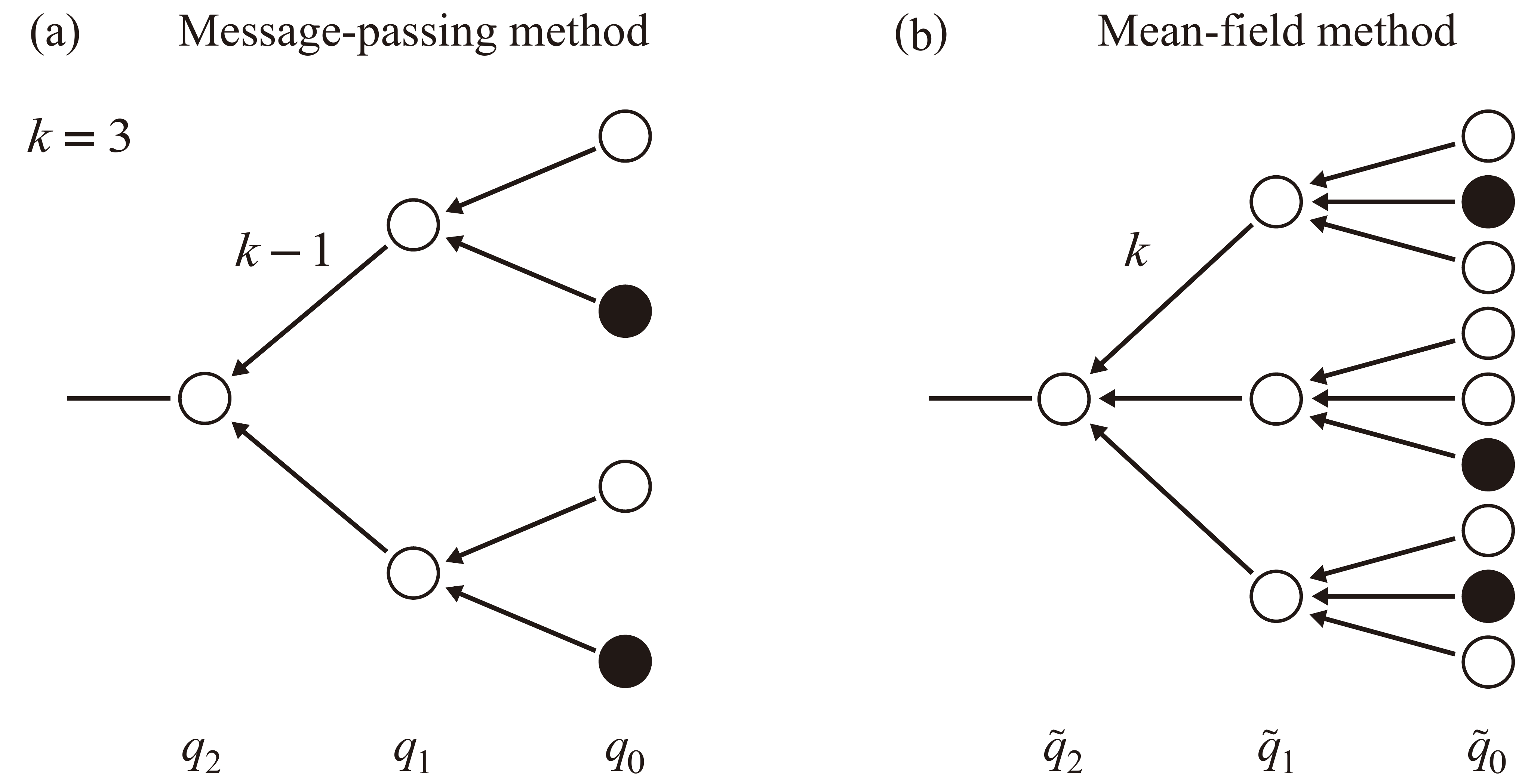}
     \caption{Schematic illustrations of the message-passing and mean-field methods in the case of a 3-regular graph (i.e., all nodes have degree $k=3$). }
     \label{fig:schematic_comparison}
 \end{figure}

As an alternative approach to the message-passing method, we also examine a naive mean-field approach that has been heavily used in the literature of network games~\citep{jackson2007diffusion,Jackson2008book,jackson-zenou2015games}. Recall that in the message-passing approach, we define $q_t$ as the probability that a neighbor randomly selected from the $k-1$ neighbors is active (Fig.~\ref{fig:schematic}). On the other hand, in the conventional mean-field approach, the counterpart of $q_t$, denoted by $\tilde{q}_t$, is defined as the probability of a neighbor randomly selected from \emph{all neighbors} being active\footnote{See, \cite{gleeson2018message} for further discussion.}.
A neighbor's activation probability $\tilde{q}_t$ is then computed using the average probability that the neighbor's neighbors are active, which is $\tilde{q}_{t-1}$. 
$\tilde{q}_t$ is thus given by the following recursion equation:
 \begin{align} 
 \tilde{q}_t = \rho_0 + (1-\rho_0)\sum_{k=1}^{\infty}\frac{k}{z}p_k \sum_{m=0}^{k} \mathcal{B}_{m}^{k}(\tilde{q}_{t-1})F\left(m,k \right),
 \label{eq:recursion_naive_q}
 \end{align}
 Note that the second summation runs from $m=0$ to $k$ as opposed to $k-1$, since all neighbors are equally taken into account in updating the activation probability.

To illuminate the essential difference between the mean-field and message-passing methods, we provide a schematic of how peer effects are computed in the case of a 3-regular graph in which all nodes have degree $3$ (Fig.~\ref{fig:schematic_comparison}).
\add{Fig.~\ref{fig:schematic_comparison}a illustrates that a node with degree~3 is affected only by two of its neighbors while sending a ``message'' to the remaining one.
On the other hand, in the mean-field method (Fig.~\ref{fig:schematic_comparison}b), each node receives messages from three neighbors while sending a message to ``another neighbor''. This means that each node essentially has one ``extra'' (or ``spurious'') edge, and the presence of such extra edges can become the source of inaccuracy. 
Therefore, to follow the flow of influence in a way that is consistent with the actual connectivity (i.e., $k=3$), one needs to remove the influence from one neighbor by taking a sum from $m=0$ to $k-1$ in the message-passing equation~\eqref{eq:recursion_q}. 
}

%  Intuitively, the difference between the two approaches would be most evident when the network is a $1$-regular graph in which there are $N/2$ connected pairs that are disconnected from each other. In the message-passing approach, the influence of a seed node only reaches the neighbor of the seed because there is no other path through which information flows. In contrast, in the mean-field method, the influence may travel beyond the neighbor of the seed because each node has an ``extra'' edge \add{(which should not exist)}, and such extra edges allow all nodes to be connected. Thus, the mean-field method may incorrectly predict $\tilde{q}\gg \rho_0$  even in a $1$-regular graph where there is no way for a global cascade to occur. Indeed, in \ER random graph, $z=1$ is a critical point above which a large cascade can happen. We will show in the next section that the mean-field method overestimates the ``true'' size of contagion, especially near critical points.
 
 Note that a common idea of the mean-field and message-passing methods is that they compute the average cascade size over all possible network structures that would be realized given a particular model (e.g., Erd\H{o}s-R\'{e}nyi, random regular graph, etc). 
 In other words, any realized network is deemed as a random sample from the network ensemble defined by the model (or more specifically, the degree distribution $p_k$).
 However, this does not mean that different networks are generated during the process of diffusion. 
 Once a network is realized (i.e., a sample is drawn) at $t=0$, the network structure is fixed throughout the diffusion process.
 By taking the average over $k$ with a prespecified degree distribution $p_k$ as in Eq.~\eqref{eq:rho_ss}, we can essentially calculate the ensemble average of cascade sizes over all possible network structures that would be realized under the model. 
 In the numerical experiments shown in the next section, we repeatedly simulate diffusion processes in each of which a network structure is randomly sampled at $t=0$.
 The average of simulated cascade sizes over different runs then corresponds to the theoretical ensemble average obtained by the mean-field/message-passing method.

 \subsection{Numerical experiments}\label{sec:numerical_monoplex}
 
 A stable fixed point $q^*=G(q^*)$ allows us to obtain the fraction of active players in the steady state, denoted by $\rho^*$, through Eq.~\eqref{eq:rho_ss}. However, it is not guaranteed that $\rho^*$ is identical to the ``true'' fraction of active players because in mean-field approximations, the network structure is captured by its degree distribution $p_k$ without incorporating detailed information about how players are connected.
  To examine how accurately the proposed method can explain the ``true'' fraction of active players, we run simulations of contagion on synthetic networks. Here, we employ \ER networks and calculate the average fraction of active players, denoted by $\hat{\rho}$, as follows:
 \begin{enumerate}
     \item For a given $z$ and $N$, generate an \ER network with the connecting probability $p=z/(N-1)$. Initially, all players are inactive.
     \item Select $\lfloor \rho_0 N\rfloor$ seed nodes at random and let them be active.
     %\item Update the state of each node simultaneously based on the threshold condition.
     \item \new{Update the state of all nodes, except for the seed nodes, simultaneously based on the threshold condition.}
     \item Repeat step 3 until convergence, where further updates do not change the states of players.  
     \item Repeat steps 1--4 and take the average of the final fractions of active players. The average is denoted by $\hat{\rho}$.
 \end{enumerate}
 It should be noted that the structure of an \ER network is characterized only by its degree distribution, so different networks having the same degree distribution are expected to yield the same simulation result on average. If network size $N$ is large enough, then the degree distribution of \ER networks having a common connecting probability $p=z/(N-1)$ will be Poissonian with mean $z$, indicating that the message-passing equation based on the Poissonian degree distribution could be applied to any \ER networks with average degree $z$. We set $N=10^4$ throughout the analysis.

 \subsubsection{Phase transitions}

 \begin{figure}[tb]
     \centering
     \includegraphics[width=15cm]{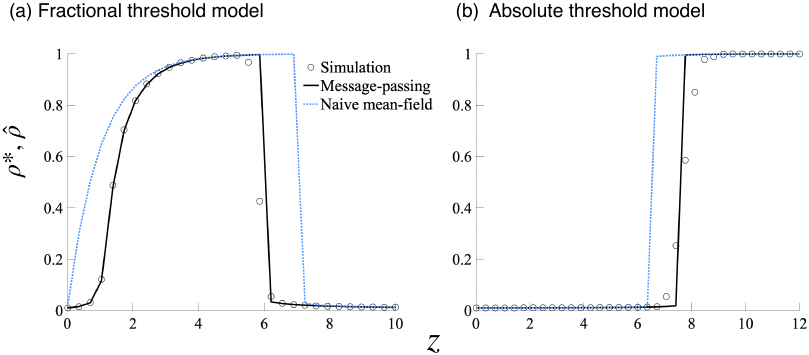}
     \caption{Steady-state fraction of active players in theory ($\rho^*$) and numerical experiments ($\hat\rho$). For (a) \new{the fractional threshold model} and (b) \new{absolute threshold model}, the thresholds are set at $\phi=0.2$ and $\theta=1.5$, respectively. $z$ denotes the average degree of \ER networks. The message-passing and naive mean-field approaches are denoted by black solid and blue dotted lines, respectively.
     Simulated fraction $\hat{\rho}$ (open circle) shows the average of 100 simulations. $N=10^4$ and $\rho_0=0.01$.}
     \label{fig:phase_transition_monoplex}
 \end{figure}
 
  Fig.~\ref{fig:phase_transition_monoplex}a shows the average of simulated fractions of active players (open circle) in the \new{fractional threshold model}. As is well known in the literature, there are two critical points with respect to the average degree $z$~\citep{Watts2002}. The first critical point is around $z=1$, below which global cascades cannot occur. This is because a network is not well connected if $z<1$, in which there are many isolated components. For a finite fraction of players to be active in a sufficiently large network, a vast majority of players needs to form a giant component~\citep{Gleeson2008}. The second critical point appears well above $z=1$, around $z=6$ in our case, above which global cascades will not happen. The reason is that if the network is dense enough, the influence from a single neighbor is diluted simply because the activation threshold is given as a share of active neighbors among $k$ neighbors.
 On the other hand, in the \addr{model of} absolute-threshold contagion, there is only one critical point (Fig.~\ref{fig:phase_transition_monoplex}b). This is because the activation threshold depends only on the total number of active neighbors $m$ independently of the number of neighbors $k$.

 Fig.~\ref{fig:phase_transition_monoplex} also shows that the message-passing method is much more accurate than the naive mean-field method in explaining the simulation results. In general, the naive mean-field method overestimates the cascade region because it fails to eliminate the repercussions of peer effects, as pointed out by \cite{gleeson2018message}. 
 \add{Such differences in accuracy are also observed for networks with scale-free degree distributions (see Fig.~\ref{fig:vs_z_SF}).}
 For this reason, we focus on the message-passing method in the following analysis.

 \subsubsection{Cascade region}\label{sec:cascade_region}

 \begin{figure}[p]%[tb]
     \centering
     \includegraphics[width=15cm]{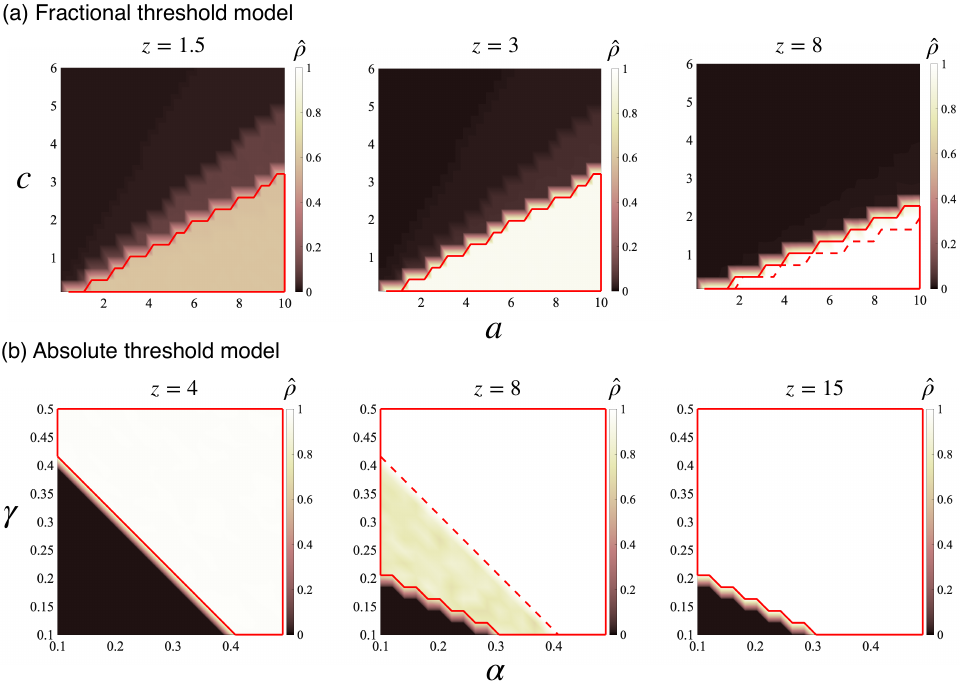}
     \caption{Simulated steady-state fraction of active players in (a) \new{the fractional threshold model} and (b) \new{absolute threshold model}. Red dashed and solid lines indicate the areas within which the \addr{GFC and GEC} conditions are satisfied, respectively (Red dashed is invisible when it is overlapped with solid). 
     For a given combination of parameters, we run 100 simulations to calculate the average of $\hat\rho$ (color bar). $N=10^4$ and $\rho_0=0.01$.}
     \label{fig:colormap_watts_grano}
 \end{figure}
 
 \begin{figure}[p]%[tb]
     \centering
     \includegraphics[width=12cm]{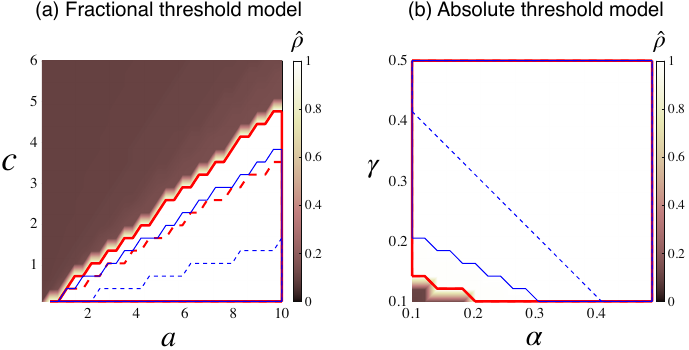}
     \caption{Comparison between generalized cascade conditions and the conventional ones. Red and blue lines denote the generalized and the standard cascade conditions, respectively. Solid and dashed respectively denote the extended and the first-order conditions. In panel (b), red dashed is fully overlapped with red solid. 
      $N=10^4$, \add{$z=8$}, and $\rho_0=0.1$.}
     \label{fig:rho10pt}
 \end{figure}

 Next, we examine how well the analytical cascade conditions predict the parameter space within which a global cascade can take place.
 Simulated fractions of active players in the steady states for given combinations of the payoff parameters are shown in Fig.~\ref{fig:colormap_watts_grano}.
The figure reveals that the \addr{GEC} condition (red solid) captures the simulated cascade region quite accurately in all the cases that we examined, while the \addr{GFC} condition (red dashed) may be less accurate (e.g., Fig.~\ref{fig:colormap_watts_grano}a, \emph{right}, and Fig.~\ref{fig:colormap_watts_grano}b, \emph{middle}). This suggests that the nonlinearity of the recursion equation $G$ would need to be exploited to predict the whole area of the true cascade region.

 On the other hand, when there are two distinct cascade regions in which the simulated values of $\hat\rho$ are apparently different, for instance, the middle panel of Fig.~\ref{fig:colormap_watts_grano}b, the \addr{GFC} condition correctly indicates the region in which the simulated fraction $\hat\rho$ is almost 1. In general, the \addr{GFC} condition is a more conservative criterion than the \addr{GEC} condition, but in many cases, the two conditions coincide and indicate exactly the same region as shown in Fig.~\ref{fig:colormap_watts_grano}.  
 
 Fig.~\ref{fig:rho10pt} illustrates how the generalized conditions can differ from the conventional cascade conditions proposed by \cite{Watts2002} and \cite{Gleeson2007}, among others. As we already pointed out, the conventional cascade conditions (blue lines) are obtained by differentiating the recursion equation at $q=0$. However, if $\rho_0$ is not sufficiently small ($\rho_0=0.1$ in Fig.~\ref{fig:rho10pt}), only the \addr{GEC} condition (red solid) would correctly predict the true cascade region, and the \addr{GFC} condition (red dashed) turns out to be as accurate as the standard extended condition (blue solid).

 \subsubsection{Dynamics of spreading process}

 \begin{figure}[tb]
     \centering
     \includegraphics[width=10cm]{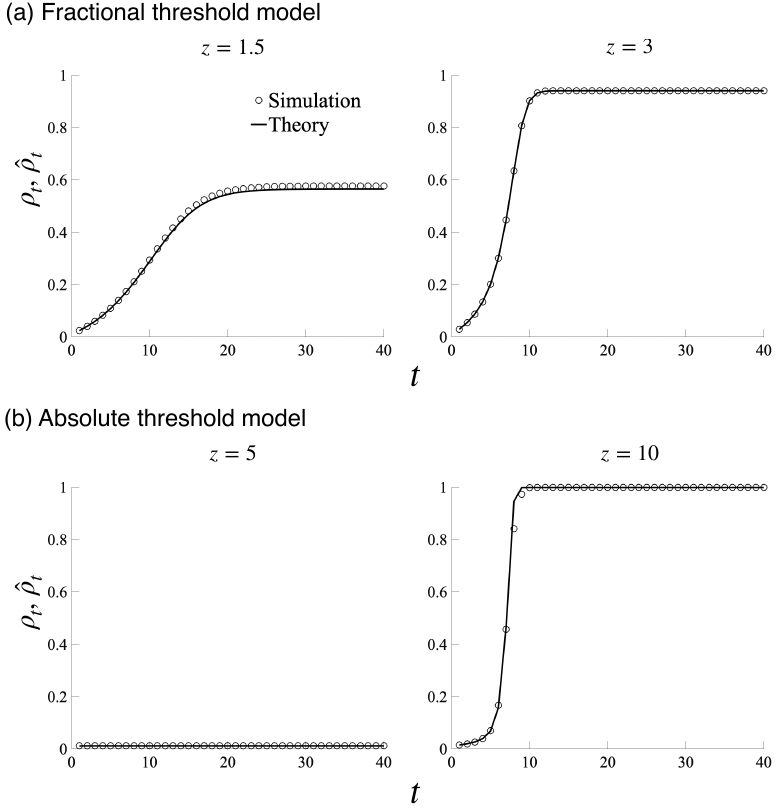}
     \caption{Dynamic paths of the fraction of active players in theory and simulation. See the caption of Fig.~\ref{fig:phase_transition_monoplex} for details.}
     \label{fig:path_monoplex}
 \end{figure}
 
 The message-passing equation $q_t=G(q_{t-1})$ is useful not only for predicting the steady-state fraction of active players, but also for describing the dynamics of contagion. 
 In fact, while iterating the recursion equation for $q_t$ is supposed to capture a spreading process on a hypothetical tree, the obtained value of $\rho_t$ from Eq.~\eqref{eq:rhot} well matches the simulated share of active players at $t$ in \ER networks (Fig.~\ref{fig:path_monoplex}), whose structures are not necessarily tree-like. This illustrates the correspondence between iterating recursion equation \eqref{eq:recursion_q}  and updating the states of players in numerical simulations. In a simulation of contagion, we simultaneously update the states of players at each time step, given the current states of their neighbors. The state-updating procedure at step $t$ corresponds to updating the recursion equation by substituting $q_{t-1}$ with $q_t$ in the RHS of Eq.~\eqref{eq:recursion_q}.

 \subsection{Social welfare}

In the current models, social welfare is measured by the sum of the realized payoffs/utilities. 
In the fractional threshold model, the social welfare $\mathcal{W}^{\rm frac}$ is given by
\begin{align}
    \mathcal{W}^{\rm frac} =
    \sum_{i\in\mathcal{N}^{\rm seed}} [-c(k_i-m_i)+am_i] + \sum_{i\notin\mathcal{N}^{\rm seed}} \max\{0,-c(k_i-m_i)+am_i\}, 
    \label{eq:welfare_frac_mono}
\end{align}
where $\mathcal{N}^{\rm seed}$ denotes the set of seed nodes.
Since seed nodes are always active independently of the states of neighbors, the sum of the seeds' payoffs are given by the first term of Eq.~\eqref{eq:welfare_frac_mono}. 
The second term comes from the fact that the payoffs of the players that are not selected as seeds are $0$ if they are inactive and $-c(k-m)+am$ if active. 
 
While the social welfare can be obtained by running numerical simulations, we can also calculate the average value of welfare based on the analytical methods. 
Using $q^\ast$ obtained by an analytical method, we can calculate the average (steady-state) social welfare $\widehat{\mathcal{W}}^{\rm frac}$ as follows:
\begin{align}
  \widehat{\mathcal{W}}^{\rm frac} &= \rho_0 N\sum_{k=0}^\infty p_k \sum_{m=0}^k\mathcal{B}_m^k(\qast) [-c(k-m)+am] \notag \\
   &\;\; + (1-\rho_0)N \sum_{k=0}^\infty p_k\sum_{m=0}^k\mathcal{B}_m^k(\qast) F(m,k)[-c(k-m)+am],
\label{eq:welfare_mono_theory}
\end{align}
where the first and second terms respectively correspond to those of Eq.~\eqref{eq:welfare_frac_mono}.

In the absolute threshold model, the social welfare is given by
\begin{align}
  {\mathcal{W}}^{\rm abs} = 
    \alpha\sum_i x_i -\frac{1}{2}\sum_i x_i^2 +\gamma\sum_i\sum_j\mathcal{A}_{ij}x_ix_j.
    \label{eq:welfare_abs}
\end{align}
Note that in the steady state, the theoretical average of the first term is $\alpha\rho^\ast N$ while that of the second term leads to $-\rho^\ast N/2$ since $x_i\in\{0,1\}$.
The third term represents ($2\gamma$ times) the number of edges between active nodes.
Since the total number of non-zero elements in the adjacency matrix is given by $Nz$, all we need to calculate the average of the third term is the average probability that a randomly chosen edge has active nodes on its both ends.
For this purpose, let $P(k,k^\prime)$ be the probability that a randomly chosen edge has a $k$-degree node and a $k^\prime$-degree node on its ends.
We also let $\rho_k^\ast$ denote the probability that a $k$-degree node is active in the steady state:
\begin{align}
    \rho_k^\ast = \rho_0 + (1-\rho_0)\sum_{m=0}^k\mathcal{B}_m^k(\qast)F(m,k).
\end{align}
Note that since we consider a class of random networks with prespecified degree distribution $p_k$, there is no degree-degree correlations~\citep{newman2018book2nd}.
The probability $P(k,k^\prime)$ is thus given by
$\frac{kp_k}{z}\frac{k^\prime p_{k^\prime}}{z}$.
It follows that the probability of a randomly chosen edge having $k$-degree and $k^\prime$-degree active nodes on its ends leads to $P(k,k^\prime)\rho_k^\ast\rho_{k^\prime}^\ast$.
Therefore, the theoretical average of social welfare in the absolute threshold model is given by
\begin{align}
     \widehat{\mathcal{W}}^{\rm abs} = \alpha\rho^\ast N -\frac{1}{2}\rho^\ast N +\gamma Nz\sum_{k=0}^\infty\sum_{k^\prime=0}^\infty P(k,k^\prime)\rho_k^\ast\rho_{k^\prime}^\ast.
\end{align}

 \begin{figure}[tb]
     \centering
     \includegraphics[width=11cm]{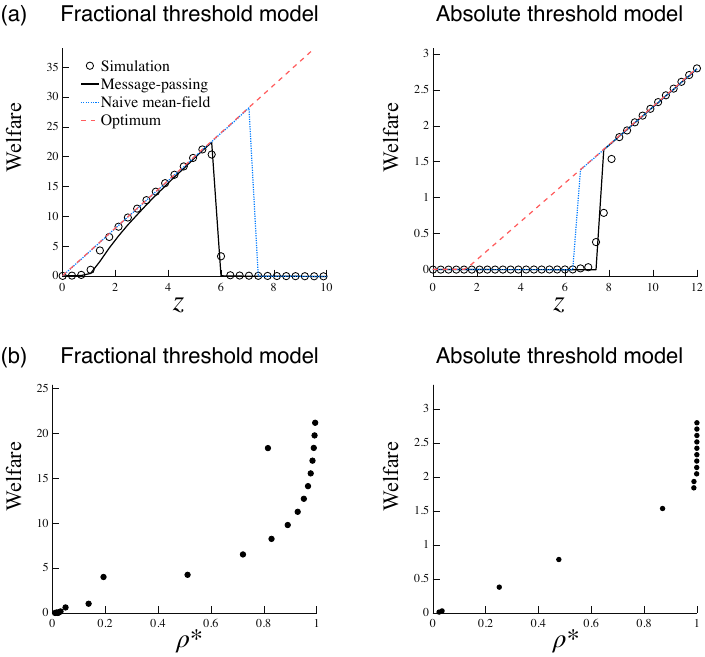}
     \caption{\add{Per capita welfare in the steady state. (a) Welfare against mean degree $z$. In each panel, simulated average welfare is compared with the theoretical averages obtained by the message-passing (black solid) and the mean-field (blue dotted) methods. It also shows the social optimum (red dashed). (b) Scatter plot of welfare against the fraction of active nodes $\rho^\ast$.   }}
     \label{fig:welfare_monoplex}
 \end{figure}
 
 Fig.~\ref{fig:welfare_monoplex}a shows the theoretical and simulated values of the per capita welfare (i.e., $\mathcal{W}/N$) against mean degree $z$.
 Clearly, the region in which a positive welfare is attained corresponds to the cascade region. 
 In fact, the social welfare is increasing in $\rho^\ast$ more than proportionally (Fig.~\ref{fig:welfare_monoplex}b).
 Fig.~\ref{fig:welfare_monoplex}a also shows the socially optimal welfare (denoted by $\overline{\mathcal{W}}$) that would be achieved if a social planner would optimally select the states of all nodes (red dashed).
 For the fractional threshold model, the social optimum is obviously attained when all nodes are active (i.e., $m_i=k_i$ for all $i$), where $\overline{\mathcal{W}}^{\rm frac} = azN$.
 In the absolute threshold model, on the other hand, the exact solution of $\{x_i\}_{i=1}^N\in \{0,1\}^N$ for the maximization of $\mathcal{W}^{\rm abs}$ (Eq.~\ref{eq:welfare_abs}) is difficult because one needs to examine  $2^N$ patterns of combinations\footnote{\add{This class of optimization problem is called \emph{quadratic unconstrained binary optimization} (QUBO) or \emph{unconstrained binary quadratic programming} (UBQP) and is known to be NP-hard~\citep{kochenberger2014qubo}.}}.  
 Instead, we consider a constrained optimum where all nodes are in the same state (i.e., $x_i=0$ or $1$ for all $i$).
 Then we have $\overline{\mathcal{W}}^{\rm abs} = \max\{0,(\alpha-1/2+\gamma z)N\}$.
 Clearly, the state of $x_i=1$ for all $i$ is indeed socially optimal when $z$ is large enough because the third term in Eq.~\eqref{eq:welfare_abs} becomes a dominant factor.
  
 The discrepancy between $\mathcal{W}$ and $\overline{\mathcal{W}}$ in Fig.~\ref{fig:welfare_monoplex} indicates that depending on the density (i.e., the mean degree) of the network, \addr{the attained Nash equilibrium is not necessarily optimal in that the social welfare may not be maximized}.
 In the fractional threshold model (resp. the absolute threshold model), the fraction of active nodes is virtually zero for $z>6$ (resp. $z<7$), in which case the players would be better off if they could collectively coordinate on being active.
 This illustrates the fact that a coordination failure occurs outside the cascade region, where there is a positive externality that being active will benefit its neighbors by making it desirable for them to be active.
 It is also worth mentioning that the extent to which the network density affects social welfare depends on which diffusion process is considered.
 If the diffusion process is ruled by a fractional threshold, then increased density will have a negative impact on social welfare near a critical point (i.e., $z\approx 6$ in Fig.~\ref{fig:welfare_monoplex}a, \emph{left}).
 On the other hand, social optimum is attained only when $z$ is sufficiently high in the absolute threshold model (Fig.~\ref{fig:welfare_monoplex}a, \emph{right}).

 \section{Games on multiplex networks}

 \begin{figure}[tb]
     \centering
     \includegraphics[width=12cm]{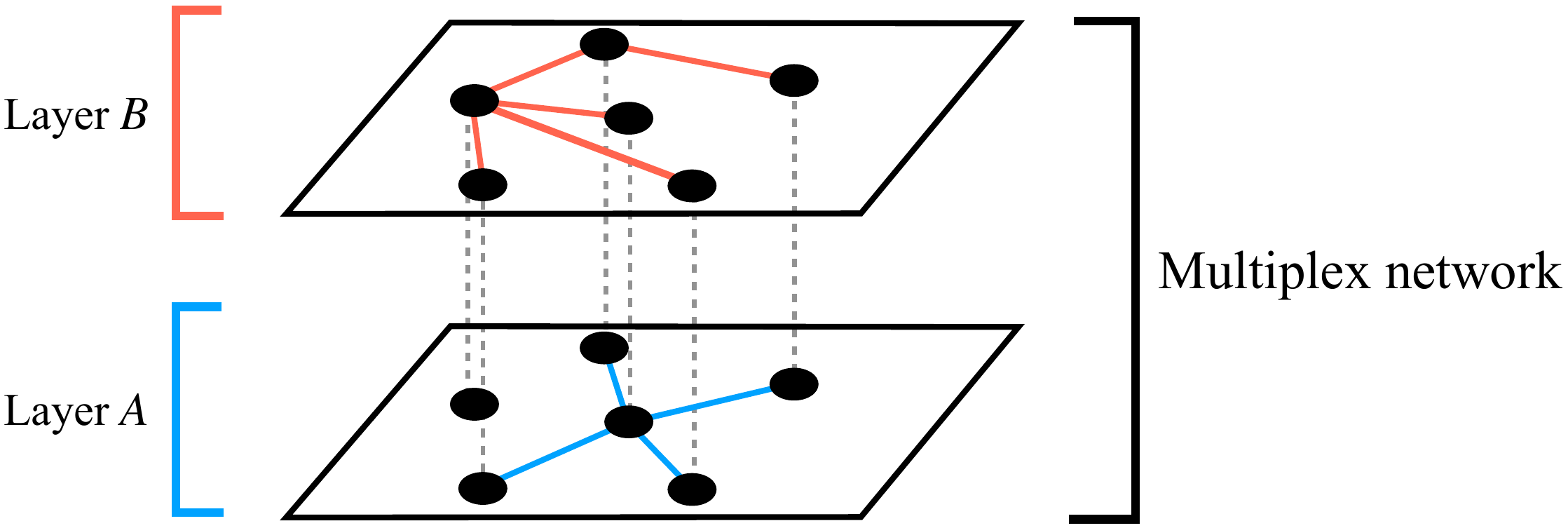}
     \caption{Schematic of a multiplex network. Different colors of edges represent different types of social ties. Two nodes connected by a dashed line denote a player.}
     \label{fig:schematic_multiplex}
 \end{figure}
 
 In reality, \addr{individuals} are connected to each other in a wide variety of social contexts. These include online spaces such as Twitter and Facebook, as well as physical spaces such as schools and work places.
 In network science, such situations are modeled as \emph{multiplex networks}, in which each layer represents a single network formed in a particular social context~\citep{bianconi2018book}\footnote{If a common set of nodes forms multiple networks, the set of networks (or layers) is called a multiplex network. If the sets of nodes in different layers are not \addr{necessarily} common, the networks are collectively called \emph{multilayer networks}.}.

 In this section, we extend the previous analysis to multiplex networks with two layers. As shown in the schematic (Fig.~\ref{fig:schematic_multiplex}), every player belongs to both the layers, and edges in different layers capture different sorts of social ties (e.g., friendships in school and follower-followee relationships in Twitter). We assume that a multiplex network consists of two independent \add{(sparse)} \ER networks.  This greatly simplifies the analysis not only because it guarantees that each network has a locally tree-like structure, but also because the presence of overlapped edges between different layers can be ignored if the networks are sufficiently large~\citep{Bianconi2013,bianconi2018book}. 
 
 \add{It should be noted that in some special cases, the dynamics of diffusion can be analyzed with an aggregated (one-layer) network since all nodes are common between the two layers. 
 In our model, if all the payoff parameters are identical across different layers, then the dynamics on a multiplex network can be described as that on the aggregated monoplex network created by combining the two layers. 
 In this section, we consider a general case in which the payoff parameters can vary across layers, taking into account different social contexts.
 }

 %\subsection{Coordination game}
 \subsection{\new{Fractional threshold model}}
 
 In a two-layer multiplex network, each player belongs to layer $A$ and layer $B$ and selects a strategy in a way that maximizes the total net payoff. The payoff matrix of a coordination game in a layer is given in the same way as that for a monoplex network (Table~\ref{tab:one-good game}). 
 The payoff of being active in layer $\ell\in\{A,B\}$ is thus given by
 \begin{align}
     -c_\ell(k_{\ell}-m_\ell) + a_\ell m_\ell,
 \end{align}where parameters $a_\ell$ and $c_\ell$ are respectively the payoff of coordination and the cost of failing to coordinate in layer $\ell$. $m_\ell$ denotes the number of active neighbors in layer $\ell$. 
 Since the payoff of being inactive is $0$, a player will decide to be active if the total net payoff of doing so is positive:
 \begin{align}
       -\left[c_A(k_{A}-m_A )+c_B(k_{B}-m_B )\right] +  a_A m_A + a_B m_B  > 0.
       \label{eq:response_condition_watts}
 \end{align}
 
  The response function $\mathcal{F}$ for \new{the fractional threshold model} on multiplex networks is then defined as
 \begin{align}
     \mathcal{F}(m_A,m_B,\vect{k})  \equiv  \mathcal{F}^{\rm frac}(m_A,m_B,\vect{k}) &=
     \begin{cases}
      1 \;\; \text{ if Eq.~\eqref{eq:response_condition_watts} is satisfied}, \\
      0 \;\; \text{ otherwise},
     \end{cases}
     \label{eq:F_watts_multi}
 \end{align}
where $\vect{k}=(k_A,k_B)$.
 \addr{Note} that this response function satisfies a multivariate version of Assumption~\ref{ass:F}, namely $\mathcal{F}(0,0,\vect{k})=0$, and $\mathcal{F}$ is \addr{increasing} in both $m_A$ and $m_B$.

%\subsection{Utility-based game}
\subsection{\new{Absolute threshold model}}

The utility function of player $i$ is given as
\begin{align}
    u_i(x_i;{\bf{x}}_{-i}) = \alpha x_i - \frac{1}{2}x_i^2 +\gamma_A\sum_{j\ne i} \mathcal{A}_{ij}^{(A)} x_ix_j 
    +\gamma_B\sum_{j\ne i} \mathcal{A}_{ij}^{(B)} x_ix_j, 
\end{align}
 where $\mathcal{A}_{ij}^{(\ell)}$ denotes the $(i,j)$th element of the adjacency matrix of layer $\ell$, and  $\gamma_\ell$ denotes the  weight on the influence from neighbors in layer $\ell$. If $\gamma_A \neq \gamma_B$, the impact of peer effects is different between the two layers (e.g., school friends may be more influential than Twitter followers).
 
 In binary-action games, player $i$ will become active if $u_i(0;{\bf{x}}_{-i})<u_i(1;{\bf{x}}_{-i})$, that is
 \begin{align}
     \alpha- \frac{1}{2} + \gamma_{A}m_{A,i}+\gamma_{B}m_{B,i} >0,
     \label{eq:condition_response_Grano}
 \end{align}
 where $m_{\ell,i}=\sum_{j\ne i} \mathcal{A}_{ij}^{(\ell)} x_j$ denotes the number of active neighbors in layer $\ell$. The response function is thus given by
 \begin{align}
     \mathcal{F}(m_A,m_B,\vect{k})  \equiv \mathcal{F}^{\rm abs}(m_A,m_B) &=
     \begin{cases}
      1 \;\; \text{ if }\;\; \alpha- \frac{1}{2} + \gamma_{A}m_{A}+\gamma_{B}m_{B} >0, \\ 
      0 \;\; \text{ otherwise}.
     \end{cases}
     \label{eq:F_Grano_multi}
 \end{align}
  It is clear that if $\alpha <1/2$, the response function satisfies $\mathcal{F}(0,0,\vect{k})=0$ and is \addr{increasing} in both $m_A$ and $m_B$.

\subsection{Contagion dynamics on multiplex networks}

Let $q_t^{\ell}$ denote the probability of a randomly chosen neighbor in layer $\ell\in\{A,B\}$ being active. The recursion equations for $q_t^A$ and $q_t^B$ are given by~\citep{Yagan2012,Brummitt2012_PRER}:
\begin{align}
    q_t^{A} &= \rho_0 + (1-\rho_0)\sum_{k_B=0}^\infty p_{k_B}\sum_{m_B=0}^{k_{B}}\mathcal{B}_{m_B}^{k_B}\left(q_{t-1}^{B}\right)\sum_{k_A=1}^\infty\frac{k_A p_{k_A}}{z_A}\sum_{m_A=0}^{k_A-1}\mathcal{B}_{m_A}^{k_A-1}\left(q_{t-1}^{A}\right)\mathcal{F}(m_A,m_B,\vect{k}), \notag \\
    &\equiv g^{(A)}(q_{t-1}^A,q_{t-1}^B), \label{eq:qtA}\\
    q_t^{B} &= \rho_0 + (1-\rho_0)\sum_{k_A=0}^\infty p_{k_A}\sum_{m_A=0}^{k_{A}}\mathcal{B}_{m_A}^{k_A}\left(q_{t-1}^{A}\right)\sum_{k_B=1}^\infty\frac{k_B p_{k_B}}{z_B}\sum_{m_B=0}^{k_B-1}\mathcal{B}_{m_B}^{k_B-1}\left(q_{t-1}^{B}\right)\mathcal{F}(m_A,m_B,\vect{k}), \notag \\
    &\equiv g^{(B)}(q_{t-1}^A,q_{t-1}^B).\label{eq:qtB}
\end{align}
Let us express the vector of functions $g^{(A)}$ and $g^{(B)}$ as: 
\begin{align}
\left(
\begin{array}{c}
q_t^A \\ q_t^B
\end{array}
\right)
 = 
 \left[
 \begin{array}{c}
 g^{(A)}(q_{t-1}^A, q_{t-1}^B) \\
 g^{(B)}(q_{t-1}^A,q_{t-1}^B)
\end{array}
\right].
\label{eq:multiplex_recursion}
\end{align}
The following proposition describes the boundary conditions of $g^{(\ell)}$.
\begin{proposition}
Let $\mathcal{F}:\mathbb{Z}_{\geq 0}^4\to \{0,1\}$ satisfy $\mathcal{F}(0,0,\vect{k})=0$. Then $g^{(\ell)}(0,0)=\rho_0$ and $g^{(\ell)}(1,1)\leq 1$ for $\ell=A,B$.
\label{prop:bounded_multi_g}
\end{proposition}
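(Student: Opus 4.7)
The plan is to mirror the proof of Proposition~\ref{prop:G_01} from the monoplex case, exploiting the fact that the binomial distribution $\mathcal{B}_{m}^{k}(q)$ collapses to a Kronecker delta at the two boundary values $q=0$ and $q=1$. Because Eqs.~\eqref{eq:qtA} and \eqref{eq:qtB} are symmetric under the swap $A\leftrightarrow B$, I will carry out the argument for $g^{(A)}$ and remark that $g^{(B)}$ follows identically.

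For $g^{(A)}(0,0)$, I would substitute $q_{t-1}^A=q_{t-1}^B=0$ directly into the RHS of \eqref{eq:qtA}. Using the convention $0^{0}=1$, the factor $\mathcal{B}_{m_B}^{k_B}(0) = \binom{k_B}{m_B}0^{m_B}(1-0)^{k_B-m_B}$ equals $1$ when $m_B=0$ and $0$ otherwise; an identical reduction applies to $\mathcal{B}_{m_A}^{k_A-1}(0)$. The quadruple sum therefore collapses to the single term $m_A=m_B=0$, at which $\mathcal{F}(0,0,\vect{k})=0$ by hypothesis. Hence every contribution beyond $\rho_0$ vanishes, giving $g^{(A)}(0,0)=\rho_0$.

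For $g^{(A)}(1,1)$, the same collapse happens at the other boundary: $\mathcal{B}_{m_B}^{k_B}(1)=\mathbbm{1}_{\{m_B=k_B\}}$ and $\mathcal{B}_{m_A}^{k_A-1}(1)=\mathbbm{1}_{\{m_A=k_A-1\}}$. This reduces the RHS to
\begin{align*}
g^{(A)}(1,1) = \rho_0 + (1-\rho_0)\sum_{k_B=0}^{\infty} p_{k_B}\sum_{k_A=1}^{\infty}\frac{k_A p_{k_A}}{z_A}\,\mathcal{F}(k_A-1,k_B,\vect{k}).
\end{align*}
Since $\mathcal{F}$ is $\{0,1\}$-valued (hence bounded by $1$), and since the weights satisfy $\sum_{k_B\geq 0}p_{k_B}=1$ and $\sum_{k_A\geq 1}k_A p_{k_A}/z_A=1$, the double sum is at most $1$, so $g^{(A)}(1,1)\leq \rho_0 + (1-\rho_0)=1$.

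Because the argument uses only (i) the boundary property $\mathcal{F}(0,0,\vect{k})=0$, (ii) the symmetric structure of \eqref{eq:qtA}--\eqref{eq:qtB}, and (iii) the fact that the marginal excess-degree distribution in each layer is a probability mass function, the same chain of equalities and inequalities proves the claim for $g^{(B)}$ verbatim. There is no genuine obstacle here; the only point that deserves care is noting the $0^0=1$ convention so that the $m_A=0$ and $m_B=0$ terms survive at the lower boundary, exactly as in the proof of Proposition~\ref{prop:G_01}.
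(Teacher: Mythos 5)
Your proposal is correct and follows essentially the same route as the paper's proof: evaluate the binomial factors at the boundary values so the sums collapse to the $m_A=m_B=0$ term (giving $\mathcal{F}(0,0,\vect{k})=0$, hence $g^{(A)}(0,0)=\rho_0$) and to the $m_A=k_A-1$, $m_B=k_B$ term (bounded by $1$ since the excess-degree and degree weights each sum to one), with symmetry handling $g^{(B)}$. The only difference is that you spell out the $0^0=1$ convention and the Kronecker-delta collapse explicitly, which the paper leaves implicit.
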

\begin{proof}
 $g^{(A)}(0,0) = \rho_0+(1-\rho_0)\sum_{k_B=0}^\infty p_{k_B}\sum_{k_A=1}^\infty\frac{k_A p_{k_A}}{z_{A}}\mathcal{F}(0,0,\vect{k})=\rho_0,$ and $g^{(A)}(1,1) = \rho_0+(1-\rho_0)\sum_{k_B=0}^\infty p_{k_B}\sum_{k_A=1}^\infty\frac{k_A p_{k_A}}{z_{A}}\mathcal{F}(k_A-1,k_B,\vect{k})\leq 1.$ Because of the symmetry, we also have $g^{(B)}(0,0) =0$ and $g^{(B)}(1,1) \leq 1$.
\end{proof}

For $q^A,q^B\in(0,1)$, the Jacobian matrix of vector $\left[g^{(A)}(q^A,q^B),g^{(B)}(q^A,q^B)\right]^\top$ is given by
\begin{align}
    \J(q^A,q^B) \equiv \left[
    \begin{array}{cc}
        \J_{11}(q^A,q^B) & \J_{12}(q^A,q^B) \\
        \J_{21}(q^A,q^B) & \J_{22}(q^A,q^B)
    \end{array}
    \right] = \left[
    \begin{array}{cc}
        \frac{\partial g^{(A)}}{\partial q^A} & \frac{\partial g^{(A)}}{\partial q^B} \\
        \frac{\partial g^{(B)}}{\partial q^A} & \frac{\partial g^{(B)}}{\partial q^B}
    \end{array}
    \right].
    \label{eq:Jacobian}
\end{align}
where each element of the Jacobian matrix is given in \ref{sec:jacobian_appendix}. 
From the monotonicity of $\mathcal{F}$, we have the next proposition:
\begin{proposition}
If $\mathcal{F}:\mathbb{Z}_{\geq 0}^4\to \{0,1\}$ is \addr{increasing} in the first two arguments, then $g^{(A)}$ and $g^{(B)}$ are \addr{increasing}.
\label{prop:monotone_multi_g}
\end{proposition}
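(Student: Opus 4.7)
The plan is to reduce the proof to a single fact that is already implicit in the proof of Proposition~\ref{prop:G_monotonicity}: for any non-decreasing $h:\{0,1,\ldots,n\}\to\mathbb{R}$, the expectation $\sum_{m=0}^{n}\mathcal{B}_{m}^{n}(q)\,h(m)$ is non-decreasing in $q\in[0,1]$. This can be verified either by differentiating and using the identity
\[
\frac{d}{dq}\sum_{m=0}^{n}\mathcal{B}_{m}^{n}(q)\,h(m)=n\sum_{m=0}^{n-1}\mathcal{B}_{m}^{n-1}(q)\bigl[h(m+1)-h(m)\bigr]\geq 0,
\]
or by a monotone coupling of $\mathrm{Bin}(n,q_1)$ and $\mathrm{Bin}(n,q_2)$ for $q_1\leq q_2$ (write each binomial as a sum of i.i.d.\ Bernoullis constructed from common uniforms). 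I would state this as a quick lemma and then apply it twice for each of $g^{(A)}$ and $g^{(B)}$.

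For $g^{(A)}(q^A,q^B)$, I would first fix $q^B$ and show monotonicity in $q^A$. For fixed values of $k_A\geq 1$, $k_B\geq 0$, and $m_B\in\{0,\ldots,k_B\}$, set
\[
h_1(m_A)\equiv \mathcal{F}(m_A,m_B,\vect{k}),
\]
which is non-decreasing in $m_A$ by hypothesis. Applying the lemma to $\sum_{m_A=0}^{k_A-1}\mathcal{B}_{m_A}^{k_A-1}(q^A)h_1(m_A)$ shows this inner sum is non-decreasing in $q^A$, and since the outer sums over $k_A,k_B,m_B$ have non-negative weights that do not depend on $q^A$, the whole expression $g^{(A)}(q^A,q^B)$ is non-decreasing in $q^A$.

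Next, I would fix $q^A$ and show monotonicity in $q^B$. Define
\[
h_2(m_B)\equiv\sum_{k_A=1}^{\infty}\frac{k_A p_{k_A}}{z_A}\sum_{m_A=0}^{k_A-1}\mathcal{B}_{m_A}^{k_A-1}(q^A)\mathcal{F}(m_A,m_B,\vect{k}).
\]
Because $\mathcal{F}$ is non-decreasing in $m_B$ and the weights $k_A p_{k_A}/z_A$ and $\mathcal{B}_{m_A}^{k_A-1}(q^A)$ are non-negative, $h_2$ is non-decreasing in $m_B$. Applying the lemma to $\sum_{m_B=0}^{k_B}\mathcal{B}_{m_B}^{k_B}(q^B)h_2(m_B)$ and then summing over $k_B$ with non-negative weights $p_{k_B}$ yields that $g^{(A)}$ is non-decreasing in $q^B$. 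The proof for $g^{(B)}$ is identical by interchanging the roles of layers $A$ and $B$.

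The main obstacle is largely bookkeeping: the four nested sums must be reorganized so that the lemma can be applied to the ``active'' variable while everything else is frozen. Once the monotonicity-preserving lemma on binomial mixtures is in hand, the argument is essentially the same as the one used for Proposition~\ref{prop:G_monotonicity}, just iterated once in each coordinate. Alternatively, one could read off the result directly from the explicit expressions for $\mathcal{J}_{11},\mathcal{J}_{12},\mathcal{J}_{21},\mathcal{J}_{22}$ in \ref{sec:jacobian_appendix} and observe that they are sums of products of non-negative binomial weights and non-negative forward differences $\mathcal{F}(m_A+1,\cdot)-\mathcal{F}(m_A,\cdot)$ or $\mathcal{F}(\cdot,m_B+1)-\mathcal{F}(\cdot,m_B)$; this gives the same conclusion in a computational rather than probabilistic form.
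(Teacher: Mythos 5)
Your proposal is correct and takes essentially the same route as the paper: the paper proves this proposition by writing out the four partial derivatives $\mathcal{J}_{11},\mathcal{J}_{12},\mathcal{J}_{21},\mathcal{J}_{22}$ explicitly in \ref{sec:jacobian_appendix} and observing that each is a sum of non-negative binomial weights times non-negative forward differences of $\mathcal{F}$ --- exactly the ``computational form'' you describe in your closing paragraph, and also exactly what your lemma's derivative identity amounts to once unpacked. Packaging the key step as a reusable monotone-binomial-mixture lemma (with the coupling proof as an alternative) is a cleaner organization but not a different argument.
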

\begin{proof}
 See, \ref{sec:jacobian_appendix}.
\end{proof}

From Propositions~\ref{prop:bounded_multi_g} and \ref{prop:monotone_multi_g},  $g^{(A)}$ and $g^{(B)}$ are \addr{increasing} and bounded on $[\rho_0,1]$. Therefore, combined with their continuity, recursion equations \eqref{eq:qtA} and \eqref{eq:qtB} converge to fixed points $(q^{A*},q^{B*})\in [\rho_0,1]^2$ if we iterate them from the initial condition $(q_0^A,q_0^B)=(\rho_0,\rho_0)$.

The cascade dynamics on multiplex networks are described by the vector-based recursion equation \eqref{eq:multiplex_recursion}, so the first-order cascade condition is that the largest eigenvalue of the Jacobian $\J$ evaluated at $q^A=q^B=\rho_0$ exceeds 1:
\begin{align}
    \lambda_{\rm max}\left[\J(\rho_0,\rho_0)\right] = \frac{\J_{11,\rho_0}+\J_{22,\rho_0}+\sqrt{\left(\J_{11,\rho_0}-\J_{22,\rho_0}\right)^2+4\J_{12,\rho_0}\J_{21,\rho_0}}}{2}>1,
    \label{eq:eig_condition}
\end{align}
where $\J_{uv,\rho_0}\equiv \J_{uv}(\rho_0,\rho_0)$. From \ref{sec:jacobian_appendix}, $\J_{11,\rho_0}$ and $\J_{12,\rho_0}$ respectively lead to
\begin{align}
    \J_{11,\rho_0} &= \sum_{k_B=0}^\infty p_{k_B}\sum_{m_B=0}^{k_{B}}\mathcal{B}_{m_B}^{k_B}\left(\rho_0\right)\sum_{k_A=2}^\infty\frac{k_A p_{k_A}}{z_A} \sum_{m_A=0}^{k_A-2}(k_A-1-m_A)\mathcal{B}_{m_A}^{k_A-1}(\rho_0) \notag \\
    &\hspace{.7cm} \times 
    \left[ \mathcal{F}(m_A+1,m_B,\vect{k}) - \mathcal{F}(m_A,m_B,\vect{k})\right],\\
     \J_{12,\rho_0} &= \sum_{k_A=1}^\infty \frac{k_A p_{k_A}}{z_A}\sum_{m_A=0}^{k_{A}-1}\mathcal{B}_{m_A}^{k_A-1}\left(\rho_0\right)\sum_{k_B=1}^\infty p_{k_B}  \sum_{m_B=0}^{k_B-1}(k_B-m_B) \mathcal{B}_{m_B}^{k_B}(\rho_0)\notag \\
    &\hspace{.7cm}\times \left[ \mathcal{F}(m_A,m_B+1,\vect{k}) - \mathcal{F}(m_A,m_B,\vect{k})\right].
\end{align}
$\J_{21,\rho_0}$ and $\J_{22,\rho_0}$ are obtained analogously.
Due to the continuity of functions $g^{(A)}(q^A,q^B)$ and $g^{(B)}(q^A,q^B)$ on $[0,1]^2$,  condition~\eqref{eq:eig_condition} based on the Jacobian can be applied for any $\rho_0\in [0,1)$. Now we can define a cascade region as follows:
\begin{definition}
For a given seed fraction $\rho_0\in[0,1)$, the cascade region for a multiplex network is a parameter space within which $\lambda_{\rm max}[\J(\rho_0,\rho_0)]>1$ is satisfied.
\end{definition}

As an extension of the first-order cascade condition for monoplex networks, some studies developed an eigenvalue-based cascade condition for multiplex networks~\citep{Yagan2012,Brummitt2012_PRER}. In these studies, however, the Jacobian matrix is obtained under the assumption that $\rho_0=0$, where the derivative is evaluated at $q^A = q^B=0$. Condition~\eqref{eq:eig_condition} is thus a generalized cascade condition for multiplex networks, in which imposing $\rho_0=0$ will recover the conventional one.

\subsection{Numerical experiments}

 \addr{In this section, we} examine how well the analytical cascade condition~\eqref{eq:eig_condition} predicts the simulated cascade region in two-layer multiplex networks. To focus on the effect of multiplicity, we consider a multiplex network that consists of two independent \ER networks having the same degree distribution $p_k$. Note that if the parameters in the response function are identical between the two layers (i.e, $a_\ell$ and $c_\ell$ for the fractional threshold model, and $\gamma_\ell$ for the absolute threshold model), then a two-layer multiplex model reduces to a monoplex model in which the average degree is given by the sum of the two average degrees of the original two layers.
 Therefore, we allow these parameters to vary across different layers in the following manner:
 \begin{align}
     a_A = (1-\delta)a, \;\; a_B = (1+\delta)a,
 \end{align}
 for the fractional threshold model, and 
 \begin{align}
     \gamma_A = (1-\delta)\gamma, \;\; \gamma_B = (1+\delta)\gamma,
 \end{align}
 for the absolute threshold model, where $\delta\in[0,1]$ represents the extent of inter-layer heterogeneity in the peer effects. This specification ensures that the average of the layer-specific parameters is equal to the original value used in the monoplex model (e.g., $(a_A+a_B)/2=a$). 
 The other parameters are assumed to be the same across layers.

%  \begin{figure}[tb]
%      \centering
%      \includegraphics[width=15cm]{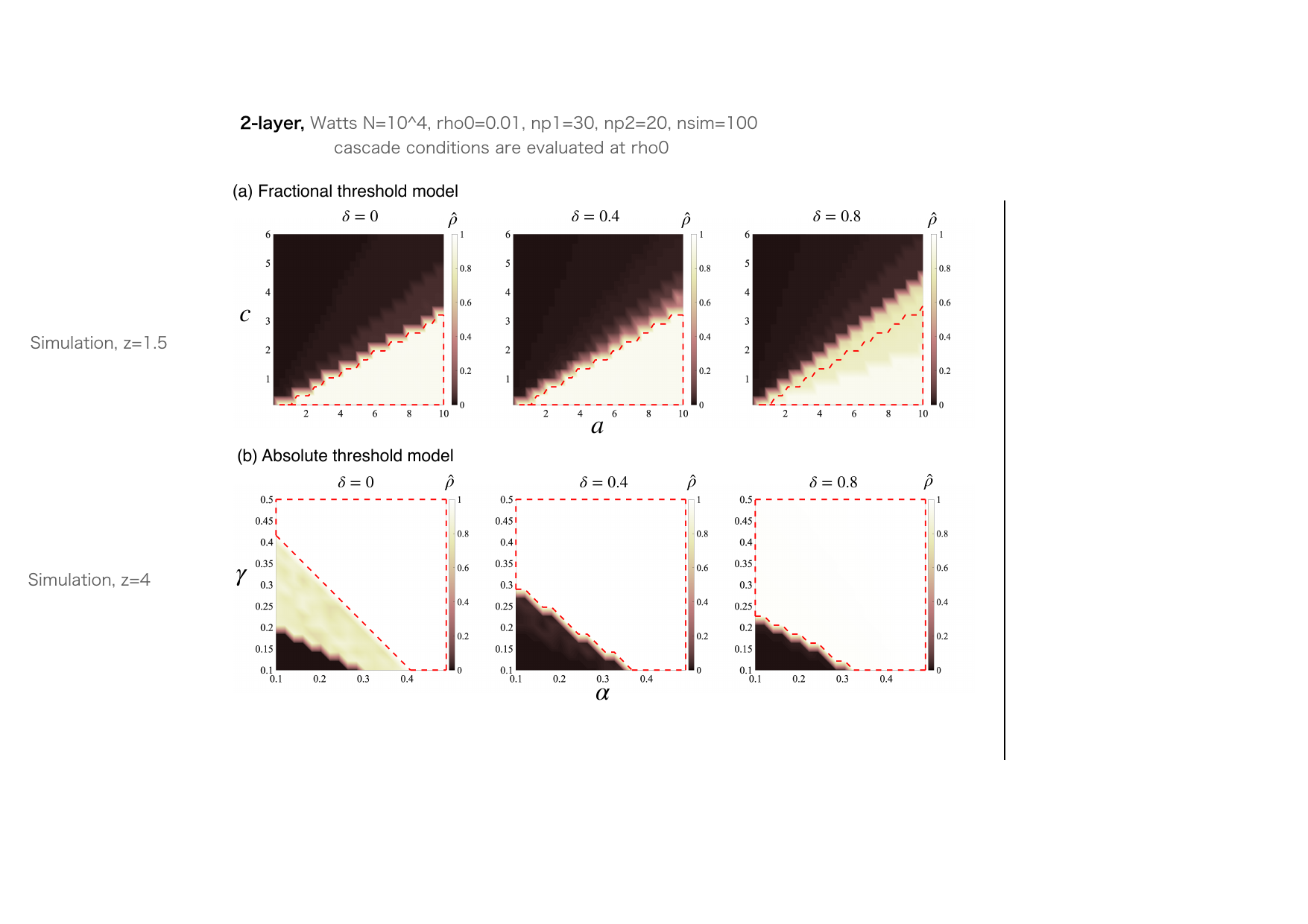}
%      \caption{Simulated and theoretical cascade regions in two-layer multiplex networks. $\delta$ denotes the degree of inter-layer heterogeneity in the threshold condition. $z=1.5$ and $4$ for (a) \new{the fractional threshold model} and (b) \new{absolute threshold model}, respectively. The case of no inter-layer heterogeneity (i.e., $\delta=0$) corresponds to the middle column of Fig.~\ref{fig:colormap_watts_grano} since the two layers can be aggregated. $N=10^4$ and $\rho_0=0.01$.}
%      \label{fig:colormap_multiplex}
%  \end{figure}

 We find that inter-layer heterogeneity in the threshold conditions expands the first-order cascade region in both classes of contagion (Fig.~\ref{fig:colormap_multiplex}). This is because the introduction of heterogeneity necessarily relaxes the threshold condition in one layer, while it is tightened in the other layer. This will let the former layer ``lead'' the spreading process, \addr{and the other layer ``follows''}. 
 On the other hand, as in the case of monoplex networks, the first-order cascade condition does not always cover the whole area of the simulated cascade region. If the \addr{average} size of \addr{simulated cascades} $\hat{\rho}$ is less than one, then the first-order condition based on the largest eigenvalue of the Jacobian may fail to indicate the correct cascade region  (Fig.~\ref{fig:colormap_multiplex}a, \emph{right}, and b, \emph{left}). In this sense, the first-order condition should be viewed as a conservative criterion, as we pointed out in the monoplex model.

 \section{Conclusion and discussion}
 
 In network science, the collective dynamics of cascading behavior on complex networks have been studied using message-passing approximations. In many of these studies, the accuracy of the proposed approximation methods is rigorously examined by comparing the theoretical predictions with the simulated results. However, cascade models often assume that individuals follow an exogenously given threshold rule \addr{without microfoundations}. In the literature on coordination games on networks, on the other hand, a similar mean-field method is occasionally used, but little quantitative validation has been done. In this paper, we provided a unifying framework that connects the two strands of literature by providing microfoundations for two classes of threshold rules. 
 \add{Our main contribution is threefold: i) we provide formal analytical proofs for the existence of a Nash equilibrium and the convergence of the iteration algorithm based on the widely used message-passing equation, ii) the cascade conditions are generalized to include more practically relevant cases in which the seed fraction is positive, and iii) we present a welfare analysis in which the social welfare is analytically calculated.}
 %In both monoplex and multiplex models, we obtained generalized cascade conditions and quantitatively assessed the accuracy of the theoretical results using numerical simulations on synthetic networks.
 
 Some issues remain to be addressed in future research. First, while we focused on \addr{random networks} to apply a message-passing method, real social networks are not necessarily random. Actual social networks often exhibit degree correlations and community structures~\citep{Barabasi2016book,newman2018book2nd}. It would be worthwhile to study the impact of these realistic factors on the likelihood of contagion. In fact, some studies have reported that (modified) message-passing methods work well for networks that are not necessarily locally tree-like~\citep{ikeda2010cactus,melnik2011unreasonable}. Second, the current work does not consider a case of incomplete information. The preferences of other players may be unknown in reality, and in such cases, players would learn the states of the neighbors from their actions~\citep{sadler2020diffusion}. Third, endogeneity of network formation is absent in our model. In many social contexts, individuals can decide when and with whom to talk. This would require a dynamic network model in which players optimally select (temporal) neighbors in a way that maximizes their utilities~\citep{jacksonJEL2017}. We hope that our work will stimulate further research on the dynamics of contagion in these realistic environments.

\bigskip

\appendix
\renewcommand{\thesection}{Appendix \Alph{section}}
\renewcommand{\theequation}{\Alph{section}.\arabic{equation}}
\setcounter{equation}{0}

\section{Proof of Proposition~\ref{prop:G_monotonicity}}
\label{sec:proof_G_mononone}

The derivative of $G(q)$ for $q\in (0,1)$ leads to
\begin{align}
    G^\prime (q) &= (1-\rho_0)\sum_{k=2}^{\infty}\frac{k}{z}p_{k}\sum_{m=0}^{k-1}\binom{k-1}{m}\left[ mq^{m-1}(1-q)^{k-1-m}\right. \notag \\
    & \;\;\;\left. -(k-1-m)q^m(1-q)^{k-2-m})\right]F(m,k), \\
    &= (1-\rho_0)\sum_{k=2}^\infty \frac{k}{z}p_{k}\Gamma(q;k),
\end{align}
where we define function $\Gamma (q;k)$ for $k=1,2,\ldots,$ as 
\begin{align}
\Gamma(q;k)\equiv \sum_{m=0}^{k-1}\binom{k-1}{m}q^{m-1}(1-q)^{k-2-m} \left[ m(1-q)-(k-1-m)q\right]F(m,k).
\end{align}
Note that $\Gamma(q;1)=0$. To prove that $G(q)$ is \addr{increasing} in $q\in(0,1)$, it is sufficient to show that $\Gamma(q;k)$ is non-negative on $(0,1)$ for $k = 2,3,\ldots$.

To prove $\Gamma(q;k)\geq 0$, we first expand $\Gamma$ as
\begin{align}
    \Gamma(q;k) =& \;\;\;\; \binom{k-1}{0}q^{-1}(1-q)^{k-2}\left[0 - (k-1)q\right]F(0,k) \notag \\
     & + \binom{k-1}{1}q^{0}(1-q)^{k-3}\left[1(1-q) - (k-2)q\right]F(1,k) \notag \\
     & + \binom{k-1}{2}q^{1}(1-q)^{k-4}\left[2(1-q) - (k-3)q\right]F(2,k) \notag \\
     & + \cdots \notag \\
     & + \binom{k-1}{k-1}q^{k-2}(1-q)^{-1}\left[(k-1)(1-q) - 0\right]F(k-1,k), \notag\\
     &\notag \\
     =& \;\;\;\;\frac{(k-1)!}{0!(k-2)!}q^{0}(1-q)^{k-2}\left[F(1,k)-F(0,k)\right] \notag\\
    & +\frac{(k-1)!}{1!(k-3)!}q^{1}(1-q)^{k-3}\left[F(2,k)-F(1,k)\right] \notag \\
   & + \frac{(k-1)!}{2!(k-4)!}q^{2}(1-q)^{k-4}\left[F(3,k)-F(2,k)\right] \notag \\
   & + \cdots \notag \\
   & + \frac{(k-1)!}{(k-2)!0!}q^{k-2}(1-q)^{0}\left[F(k-1,k)-F(k-2,k)\right]. 
    \label{eq:Gamma_arrange}
\end{align}
Since $F$ is \addr{increasing} in its first argument (Assumption~\ref{ass:F}), it is clear that $F(m+1,k)-F(m,k)\geq 0$ for $m=0,\ldots k-2$ and therefore $\Gamma(q;k)\geq 0$ for $k=2,3,\ldots$. It follows that for any $q\in (0,1)$, we have
\begin{align}
    G^\prime(q) = (1-\rho_0)\sum_{k=2}^{\infty}\frac{k}{z}p_{k} \sum_{s=0}^{k-2}\binom{k-1}{s}(k-1-s)q^{s}(1-q)^{k-2-s}\left[F(s+1,k)-F(s,k)\right]\geq 0. 
    \label{eq:Gprime_final}
\end{align}
From the continuity of $G(q)$, it is true that $\lim_{q\downarrow 0}G(q) = G(0)$ and $\lim_{q\uparrow 1}G(q) = G(1)$. This proves that $G(q)$ is \addr{increasing} in $q\in[0,1]$. \qed
\bigskip

\addr{
\section{Proof of Theorem~\ref{th:fixed_point}}\label{sec:proof_fixed_point}
 We apply Kleene's fixed point theorem to $G$ on $[0,1]$.
 Let $\{q_t\}_{t\in\mathbb{Z}_{\geq 0}}$ be an increasing sequence in $[0,1]$ such that $q_t\leq q_{t+1}$.
 Since $0\leq q_t\leq 1$ for all $t\in\mathbb{Z}_{\geq 0}$, we have $\sup_{t\in\mathbb{Z}_{\geq 0}} q_t$ in $[0,1]$.
 Therefore, $[0,1]$ is $\omega$-complete.
 From the continuity and monotonicity of $G$ on $[0,1]$ (Proposition~\ref{prop:G_monotonicity}), we have $G\left({\rm sup}_{t\in\mathbb{Z}_{\geq 0}}q_t\right) 
 = {\rm sup}_{t\in\mathbb{Z}_{\geq 0}}G(q_{t})$ for any increasing sequence $\{q_t\}_{t\in\mathbb{Z}_{\geq 0}}$ in $[0,1]$. Thus, $G$ is $\omega$-continuous.

 It remains to show that $q_0 \leq G(q_0) $ for any initial value $q_0\in[0,\rho_0]$.
 Let $q_0$ satisfy $0\leq q_0\leq\rho_0$. 
 From the monotonicity of $G$, we have $G(0)\leq G(q_0)\leq G(\rho_0)$, which indicates that $q_0\leq\rho_0\leq G(q_0)$ because we have $G(0)=\rho_0$ (Proposition~\ref{prop:G_01}). 
 This proves that $q_0\leq G(q_0)$ for any $q_0\in[0,\rho_0]$.
 It follows from Kleene's fixed point theorem that the sequence $\{q_t\}_{t\in\mathbb{Z}_{\geq 0}}$ generate by $G$ for a given initial value $q_0\in[0,\rho_0]$ converges to the least fixed point $q^\ast =G(q^\ast)$, where $q^\ast=\sup_{t\in\mathbb{Z}_{\geq 0}}G^t(q_0)\in[0,1]$.
\qed
}

\addr{
\section{Proof of Theorem~\ref{th:cascade_condition1st}}
\label{sec:proof_cascade_1st}
First, consider the case of $\rho_0=0$. From Proposition~\ref{prop:G_01}, we have $G(0)=\rho_0=0$, so $0$ is a fixed point. If $\lim_{q\downarrow 0} G^\prime(q)>1$, then $\lim_{q\downarrow 0} G^\prime(q) = \lim_{q\downarrow 0}\lim_{\epsilon\downarrow 0}\frac{G(q+\epsilon)-G(q)}{\epsilon} =\lim_{\epsilon\downarrow 0}\frac{G(\epsilon)-G(0)}{\epsilon} = \lim_{\epsilon\downarrow 0}\frac{G(\epsilon)}{\epsilon}>1$. Thus, we have $\epsilon < G(\epsilon)$ in the neighborhood of the fixed point $\rho_0=0$. From the monotonicity of $G$, it follows that $\epsilon < G(\epsilon) < G(G(\epsilon))<\cdots$. This proves that $\rho_0=0$ is an unstable fixed point at which a small perturbation will cause a transition to a larger fixed point. 
  
  If $\rho_0>0$, we would have a fixed point $\rho_0 = G(\rho_0)$ if and only if $G(\rho_0)-G(0)=0$ since $G(0)=\rho_0$.  To satisfy $G(\rho_0)-G(0)=0$, $G$ must not be increasing on $[0,\rho_0]$ due to its monotonicity. Thus, $G(q+\epsilon)-G(q)=0$ must hold for any $q\in[0,\rho_0)$ and $\epsilon\in (0,\rho_0-q)$ for $\rho_0(>0)$ to be a fixed point. Then, we have $\lim_{q\uparrow\rho_0}G^\prime (q) = \lim_{q\uparrow\rho_0}\lim_{\epsilon\downarrow 0}\frac{G(q+\epsilon)-G(q)}{\epsilon}=\lim_{q\uparrow\rho_0}\lim_{\epsilon\downarrow 0}\frac{0}{\epsilon}=0.$
  It follows from the continuity of $G^\prime(q)$ on $(0,1)$ (see, Eq.~\ref{eq:Gprime_final}) that $\lim_{q\downarrow\rho_0} G^\prime(q) =\lim_{q\uparrow\rho_0} G^\prime(q)= 0$. This proves that $\rho_0>0$ is not a fixed point if $\lim_{q\downarrow\rho_0} G^\prime(q)>1$. \qed
}

\section{Second derivative $S^{''}(q)$ in Eq.~\eqref{eq:S_second}}\label{sec:second_derivative}

From Eq.~\eqref{eq:Gprime_final}, the second derivative of $G(q)$ on $(0,1)$ leads to
\begin{align}
    G^{''} (q) &= (1-\rho_0)\sum_{k=2}^{\infty}\frac{k}{z}p_{k}\sum_{m=0}^{k-2}\binom{k-1}{m}(k-1-m)\left[ mq^{m-1}(1-q)^{k-2-m}\right. \notag \\
    & \;\;\;\left. -(k-2-m)q^m(1-q)^{k-3-m})\right]\left[F(m+1,k)-F(m)\right], \notag \\
    &= (1-\rho_0)\sum_{k=2}^\infty \frac{k}{z}p_{k}\widetilde\Gamma(q;k), \label{eq:G2_gamma}
\end{align}
where we define function $\widetilde\Gamma (q;k)$ for $k=2,3,\ldots,$ as 
\begin{align}
\widetilde\Gamma(q;k)\equiv &\sum_{m=0}^{k-2}\binom{k-1}{m}(k-1-m)q^{m-1}(1-q)^{k-3-m} \left[ m(1-q)-(k-2-m)q\right] \notag \\
 & \times [F(m+1,k)-F(m,k)].
\end{align}
Note that $\widetilde\Gamma(q;2)=0$. 
As in Eq.~\eqref{eq:Gamma_arrange}, we rearrange $\widetilde\Gamma$ as
\begin{align}
    \widetilde\Gamma(q;k) =& \;\;\;\; \binom{k-1}{0}(k-1)q^{-1}(1-q)^{k-3}\left[0 - (k-2)q\right][F(1,k)-F(0,k)] \notag \\
     & + \binom{k-1}{1}(k-2)q^{0}(1-q)^{k-4}\left[1(1-q) - (k-3)q\right][F(2,k)-F(1,k)] \notag \\
     & + \binom{k-1}{2}(k-3)q^{1}(1-q)^{k-5}\left[2(1-q) - (k-4)q\right][F(3,k)-F(2,k)] \notag \\
     & + \cdots \notag \\
     & + \binom{k-1}{k-2}1\cdot q^{k-3}(1-q)^{-1}\left[(k-2)(1-q) - 0\right][F(k-1,k)-F(k-2,k)], \notag\\
     &\notag \\
     =& \;\;\;\;\frac{(k-1)!}{0!(k-3)!}q^{0}(1-q)^{k-3}\left[F(2,k)-2F(1,k)+F(0,k)\right] \notag\\
    & +\frac{(k-1)!}{1!(k-4)!}q^{1}(1-q)^{k-4}\left[F(3,k)-2F(2,k)+F(1,k)\right] \notag \\
   & + \frac{(k-1)!}{2!(k-5)!}q^{2}(1-q)^{k-5}\left[F(4,k)-2F(3,k)+F(2,k)\right] \notag \\
   & + \cdots \notag \\
   & + \frac{(k-1)!}{(k-3)!0!}q^{k-3}(1-q)^{0}\left[F(k-1,k)-2F(k-2,k)+F(k-3,k)\right]. 
    \label{eq:Gamma2_arrange}
\end{align}
It follows that 
\begin{align}
    \widetilde\Gamma(q;k) =& \sum_{s=0}^{k-3}\binom{k-1}{s}(k-1-s)(k-2-s)q^{s}(1-q)^{k-3-s} \notag \\
    & \times\left[F(s+2,k)-2F(s+1,k)+F(s,k)\right]. 
    \label{eq:Gamma_second}
\end{align}
Since $G^{''}(q)=(1-\rho_0)\sum_{k=3}^\infty \frac{k}{z}p_{k}\widetilde\Gamma(q;k)$ and $S^{''}(q)=(1-\rho_0)^{-1}G^{''}(q)$, we have Eq.~\eqref{eq:S_second}.
\bigskip

\add{
\section{Scale-free networks}\label{sec:scale_free}

In the current model, the degree distribution $\{p_k\}$ can take any functional form as long as the network is sparse and has a locally tree-like structure.
In the main text, we consider the standard \ER model, so $p_k$ is given by a Poisson distribution in the limit of large $N$.
On the other hand, it has been argued by a number of studies that real-world networks are \emph{scale-free}: the degrees of the networks follow power-law distributions where $p_k\sim k^{-\eta}$~\citep{Barabasi2016book,newman2018book2nd}.

As in Fig.~\ref{fig:phase_transition_monoplex}, Fig.~\ref{fig:vs_z_SF} shows a comparison between the two methods based on scale-free degree distributions (the exponent $\eta$ is set at $2.5$).
To generate scale-free networks with different average degrees, the prespecified minimum degree $k_{\rm min}$ is shifted from 1 to 6 ($z=0$ corresponds to empty network).
We find that the overall property is similar to the one we see in Fig.~\ref{fig:phase_transition_monoplex}; there is a non-negligible difference between the message-passing and mean-field methods in terms of the obtained $\rho^\ast$ while the range in which cascade occurs is slightly different from the one for \ER random graphs.

}

\section{Jacobian matrix $\mathcal{J}(q^A,q^B)$}\label{sec:jacobian_appendix}

Each element of the Jacobian matrix $\mathcal{J}(q^A,q^B)$ (Eq.~\ref{eq:Jacobian}) is given as follows:
\begin{align}
    {\J}_{11}(q^A,q^B) &= (1-\rho_0)\sum_{k_B=0}^\infty p_{k_B}\sum_{m_B=0}^{k_{B}}\mathcal{B}_{m_B}^{k_B}\left(q^{B}\right)\sum_{k_A=2}^\infty\frac{k_A p_{k_A}}{z_A} \sum_{m_A=0}^{k_A-2}(k_A-1-m_A)\binom{k_A-1}{m_A} \notag \\
    &\hspace{.7cm} \times 
    (q^A)^{m_A}(1-q^A)^{k_A-2-m_A}\left[ \mathcal{F}(m_A+1,m_B,\vect{k}) - \mathcal{F}(m_A,m_B,\vect{k})\right],\\
    \J_{12}(q^A,q^B) &= (1-\rho_0)\sum_{k_A=1}^\infty \frac{k_A p_{k_A}}{z_A}\sum_{m_A=0}^{k_{A}-1}\mathcal{B}_{m_A}^{k_A-1}\left(q^{A}\right)\sum_{k_B=1}^\infty p_{k_B}  \sum_{m_B=0}^{k_B-1}(k_B-m_B) \binom{k_B}{m_B}\notag \\
    &\hspace{.7cm}\times (q^B)^{m_B}(1-q^B)^{k_B-1-m_B}\left[ \mathcal{F}(m_A,m_B+1,\vect{k}) - \mathcal{F}(m_A,m_B,\vect{k})\right],\\
    \J_{21}(q^A,q^B) &= (1-\rho_0)\sum_{k_B=1}^\infty \frac{k_B p_{k_B}}{z_B}\sum_{m_B=0}^{k_{B}-1}\mathcal{B}_{m_B}^{k_B-1}\left(q^{B}\right)\sum_{k_A=1}^\infty p_{k_A} \sum_{m_A=0}^{k_A-1}(k_A-m_A) \binom{k_A}{m_A}\notag \\ 
    &\hspace{.7cm}\times (q^A)^{m_A}(1-q^A)^{k_A-1-m_A}\left[ \mathcal{F}(m_A+1,m_B,\vect{k}) - \mathcal{F}(m_A,m_B,\vect{k})\right],\\
    \J_{22}(q^A,q^B) &= (1-\rho_0)\sum_{k_A=0}^\infty p_{k_A}\sum_{m_A=0}^{k_{A}}\mathcal{B}_{m_A}^{k_A}\left(q^{A}\right)\sum_{k_B=2}^\infty\frac{k_B p_{k_B}}{z_B} \sum_{m_B=0}^{k_B-2}(k_B-1-m_B)\binom{k_B-1}{m_B}\notag \\
    &\hspace{.7cm}\times (q^B)^{m_B}(1-q^B)^{k_B-2-m_B}\left[ \mathcal{F}(m_A,m_B+1,\vect{k}) - \mathcal{F}(m_A,m_B,\vect{k})\right].
\end{align}
Note that from the monotonicity of the response function $\mathcal{F}$, all the elements of $\mathcal{J}$ are non-negative for $0\leq q^A\leq 1$ and $0\leq q^B\leq 1$. This proves that $g^{(A)}$ and $g^{(B)}$ are \addr{increasing}.

%\bibliography{network_game}
%\bibliographystyle{ecta}%{unsrtnat}
%\bibliographystyle{unsrtnat}

 %\bibliographystyle{chicago}%

\clearpage
% \figcountSI\label{fig:bifurcation_Grano}%FigS1
% \figcountSI\label{fig:vs_z_SF}%FigS2
% \figcountSI\label{fig:colormap_multiplex}%FigS3

% Hyperlink
%\usepackage[colorlinks=true,linkcolor=black,citecolor=black,hyperfootnotes=true]{hyperref}
\setcounter{section}{0}
\setcounter{table}{0}
\setcounter{equation}{0}
\setcounter{figure}{0}
\setcounter{page}{1}
     
\renewcommand{\thetable}{S\arabic{table}}
\renewcommand{\thefigure}{S\arabic{figure}}
\renewcommand{\thesection}{S\arabic{section}}
\renewcommand{\theequation}{S\arabic{equation}}

%\begin{document}
{\flushleft
{\fontsize{16pt}{16pt}\selectfont
 \textbf{Online Appendix} \\
 \vspace{.7cm}
 \Large{``Dynamics of diffusion on monoplex and multiplex networks: \\ A message-passing approach"} \\
 \vspace{.5cm}
{\large Teruyoshi Kobayashi and Tomokatsu Onaga}
}
}
\vspace{2cm}
%\section{}
\begin{figure}[hb]
     \centering
     \includegraphics[width=12cm]{}
     \caption{Fixed points of recursion equation \eqref{eq:recursion_q} for \new{the absolute threshold model}. $\rho_0=0.01$ and $\theta=1.5$. The mean degree $z$ is (a) $2$, (b) $5$ and (c) 10. See the caption of Fig.~\ref{fig:bifurcation_Watts} for details.}
     \label{fig:bifurcation_Grano}
 \end{figure}

\begin{figure}[tbh]
    \centering
    \includegraphics[width=13cm]{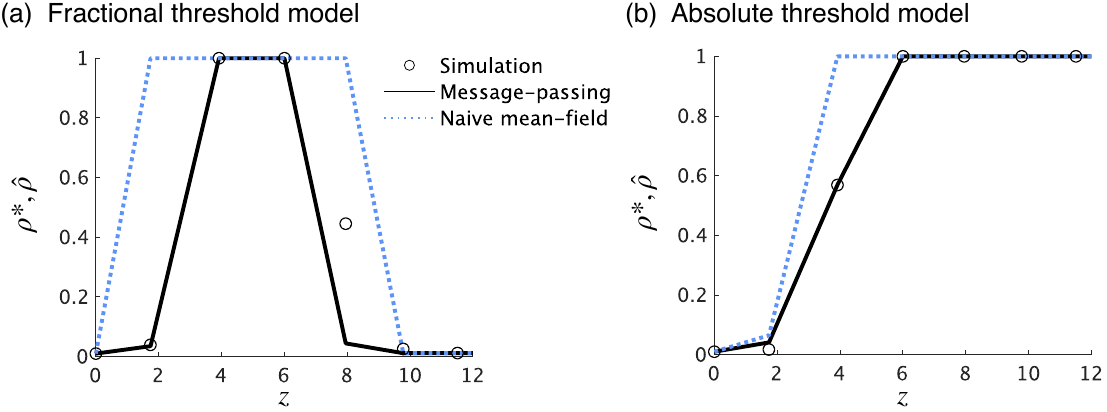}
    \caption{\add{Steady-state fraction of active players in theory ($\rho^*$) and numerical experiments ($\hat\rho$) based on scale-free networks.
    The minimum degree is varied from $1$ to $6$ to generate networks with different average degrees. 
    The exponent of the scale-free degree distributions is set at $2.5$.
    See the caption of Fig.~\ref{fig:phase_transition_monoplex} for the other parameter values.
    }}
    \label{fig:vs_z_SF}
\end{figure}

\begin{figure}[thb]
     \centering
     \includegraphics[width=15cm]{}
     \caption{Simulated and theoretical cascade regions in two-layer multiplex networks. $\delta$ denotes the degree of inter-layer heterogeneity in the threshold condition. $z=1.5$ and $4$ for (a) \new{the fractional threshold model} and (b) \new{absolute threshold model}, respectively. The case of no inter-layer heterogeneity (i.e., $\delta=0$) corresponds to the middle column of Fig.~\ref{fig:colormap_watts_grano} since the two layers can be aggregated. $N=10^4$ and $\rho_0=0.01$.}
     \label{fig:colormap_multiplex}
 \end{figure}

%\end{document}

\end{CJK*}

\end{document}